\pgfplotsset{compat=1.15}
\newtheorem{thm}{Theorem}
\newtheorem{cor}[thm]{Corollary}
\newtheorem{prop}[thm]{Proposition}
\newcommand{\ket}[1]{\left| #1 \right\rangle}
\newcommand{\bra}[1]{\left\langle #1 \right|}
\newcommand{\ketbra}[2]{\left|#1\middle\rangle\middle\langle#2\right|}
\newcommand{\de}[1]{\left( #1 \right)}
\newcommand{\DE}[1]{\left\{#1\right\}}
\newcommand{\ie}{\emph{i.e.}: }
\newcommand{\mc}[1]{\mathcal{#1}}
\def\be{\begin{equation}}
\def\ee{\end{equation}}
\definecolor{violeta}{cmyk}{0.07,0.90,0,0.34}
\newcommand{\CG}{\Lambda} 
\newcommand{\Dyn}{\mathbf{U}} 
\newcommand{\tr}{\mbox{tr}} 
\definecolor{cgreen}{RGB}{26, 199, 76}
\begin{document}


\title{Investigating Coarse-Grainings and Emergent Quantum Dynamics with Four Mathematical Perspectives}

\author{Cristhiano Duarte} 
\affiliation{Schmid College of Science and Technology, Chapman University, One 
University Drive, Orange, CA, 92866, USA}
\affiliation{Institute for Quantum Studies, Chapman University, One 
University Drive, Orange, CA, 92866, USA}
\affiliation{Wigner Research Centre for Physics, H-1121, Budapest Hungary}

\author{Barbara Amaral}
\affiliation{Departamento de F\'isica e Matem\'atica, CAP - Universidade Federal de S\~ao Jo\~ao del-Rei, 36.420-000, Ouro Branco, MG, Brazil} 
\affiliation{Department of Mathematical Physics, Institute of Physics, University of S\~ao Paulo, R. do Mat\~ao 1371, S\~ao Paulo 05508-090, SP, Brazil}

\author{Marcelo {Terra Cunha}} 
\affiliation{Universidade Estadual de Campinas, Cidade Universit\'aria Zeferino Vaz, 651 S\'ergio Buarque de Holanda, Campinas, SP,13083059, Brazil}

\author{Matthew Leifer}
\affiliation{Schmid College of Science and Technology, Chapman University, One 
University Drive, Orange, CA, 92866, USA}
\affiliation{Institute for Quantum Studies, Chapman University, One 
University Drive, Orange, CA, 92866, USA}

\begin{abstract}
With the birth of  quantum information  science,  many  tools  have been  developed  to  deal with  many-body  quantum  systems. Although a complete description of such systems is desirable, it will not always be possible to achieve this goal, as the complexity of such description tends to increase with the number of particles. 
It is thus crucial to build effective quantum theories aiming to understand how the description in one scale emerges from the description of a deeper scale. This contribution explores different mathematical tools to the study of emergent effective dynamics in scenarios where a system is subject to a unitary evolution and the coarse-grained description of it is given by a CPTP map taking the original system into an \emph{effective} Hilbert space of smaller dimension. We see that a well-defined effective dynamics can only be defined when  some  sort  of matching between the underlying unitary and the coarse-graining map is satisfied. Our main goal is to use these different tools to derive necessary and sufficient conditions for this matching in the general case.
\end{abstract}

\maketitle


\section{Introduction} \label{Sec.Intro}

With the development of technology for control of many-body quantum systems, we are able to build and describe quantum devices with an increasing number of individual systems~\cite{QLYN07,CGP15, GRE14,AruteEtAl19}. Although a complete description of such systems is desirable, it will not always be possible to achieve this goal, since the complexity of such description increases exponentially with the number of particles. Let alone the inherent laboratory errors every experimentalist has to deal with~\cite{Kuhr16}. 
It is crucial, then, to the development of new quantum technologies to build \emph{effective theories}, a machinery to understand how  the description in one scale emerges from the description in a deeper scale, an idea that is central not only in (quantum) physics~\cite{Kabernik18, Wolfram83, Castiglione08} but predominant across many different scientific fields~\cite{ZTZ13,PSP13,Baeurle09}.

Suppose one wants to describe the closed, unitary evolution~\cite{NC00} a quantum system goes through. It might be the case that -- either due to imperfections or lack of full knowledge or the high-complexity of the underlying system -- one does not have full access to that system and, instead, the observer only has access to or can interact with  a less informative part of the original system. Suppose that even in this extremely adverse situation, one still wants to describe the evolution of the system they have access to. This rather unfriendly scenario outlines exactly the situation we want to address with the present work.

Simply put, this contribution can be cast as a mathematical investigation of the physical framework developed in \cite{DCBM17}. We employ different mathematical tools to study the emergence of effective dynamics in scenarios where a system is subject to a unitary evolution and where the effective description of the system is given by a coarse-graining map that takes the original system to an effective Hilbert space of smaller dimension. Our main goal is to develop different tools that give necessary and sufficient conditions for this matching in the general case. To facilitate the reading, we also examine few examples and trace a parallel with the ``classical case''.

Obtaining necessary and sufficient conditions is not a trivial problem, not even in the classical case. Classically, where the systems are described by random variables, as we will see later on, with the help of classical causal inference, it is possible to show that, although emergent, we can infer some sort of correlational influence between the observed macroscopic, coarse-grained variables. Remarkably, this is no longer true in the quantum case, where we have a well-defined effective dynamics only  when  some  sort  of  good matching between the underlying unitary and the coarse-graining holds true. 

We organised the work as follows: Sec.~\ref{Sec.Prel} paves the road for the rest of the paper, as we discuss and define there what we mean by quantum and classical versions of the coarse-graining problem. We envisaged this section to let the reader get traction, and intuition, on the topic. After that all the subsequent sections are dedicated to different approaches to the same problem. We kick it off in Sec.~\ref{Sec.GeoApp} where we show how fiber bundles can help in our problem. Sec.~\ref{Sec.AlgApp}, on the other hand, contains quite the opposite toolbox, as it is dedicated to an algebraic approach. Next, through a semi-definite program we establish a very interesting link between our problem and that of Markovianity in Sec.~\ref{Sec.SDPApp}. Finally, in Sec.~\ref{Sec.UnitEquivApp} inspired on two very standard theorems in quantum information, we show how come there exists a unitary connection between macroscopic and microscopic description in our framework. Wrapping up, we conclude this work discussing what we have done and pointing out to inherent limitations of our formalism and further questions. 

\section{The Coarse-Graining Problem}\label{Sec.Prel}

\subsection{Quantum Case}

Pragmatically, the evolution of closed quantum systems is dictated by a family of unitary channels $\mc{U}_{t}$. If $\rho_0$ is the initial state of this closed system at $t=0$, its evolved state at any time $t>0$ is given by:
\begin{equation}
    \rho_{t}=U_{t}\rho_0U_{t}^{\ast}.
\end{equation}

Nonetheless, as the authors point out in Ref.~\cite{DCBM17}, it might be the case we do not have access to that system unitarily evolving according to $\mc{U}_t$. It might well be the case that either due to imperfections or lack of full knowledge of the underlying system $\rho_t$, we only have access to or can interact with $\sigma_t$, a less informative part of the original system (see fig.~\ref{Fig_Diagram_Quantum}). Even though, we want to ascribe an effective evolution to this system we have in hands. 

Keeping things as simple as possible, but already complicated enough, in ref.~\cite{DCBM17} the authors modelled mathematically this ``lack" or ``loss" of information using a completely positive trace preserving (CPTP) map $\Lambda$ ~\cite{NC00,Watrous18}. Their idea is that whatever these processes are, ultimately they ought to be quantum mechanical. So that, with no way out, they must be represented by a CPTP map acting on a smaller dimensional effective quantum state~\cite{RRCMetal08,PRRFPK18}.
\begin{center}
    \begin{figure}
        \centering
        \includegraphics[scale=0.2]{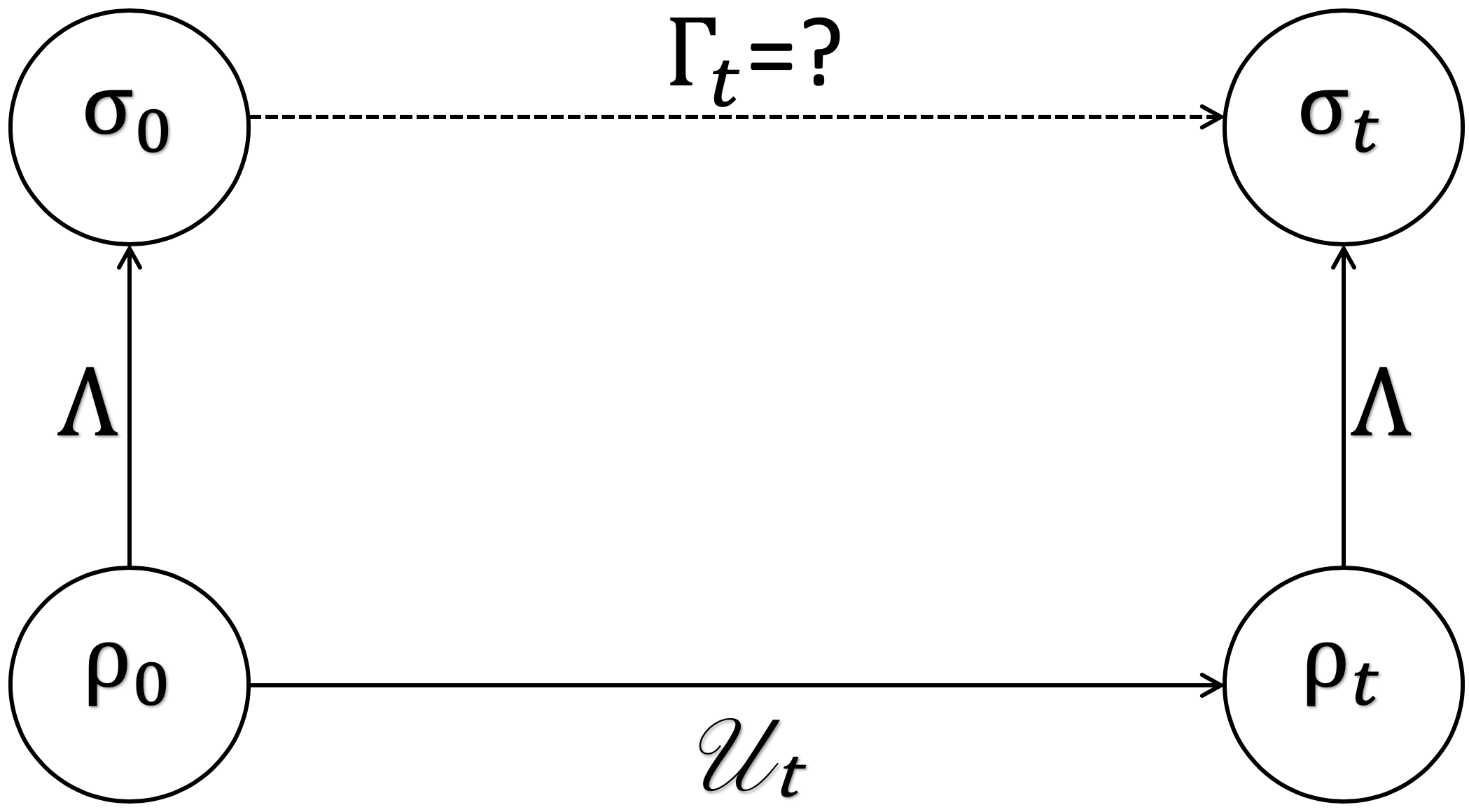}
        \caption{\begin{footnotesize}Coarse-graining diagram. Vertical arrows represent loss, lack or blur of information. Lower horizontal arrow represents the closed, unitary dynamics the system is going through. Uppermost horizontal arrow frames the emergent, perceptible, macroscopic dynamics.\end{footnotesize} }
        \label{Fig_Diagram_Quantum}
    \end{figure}
\end{center}
Although artificial at a first sight, the description adopted in~\cite{DCBM17} turns out to be a mathematically rigorous toy model for coarse-graining. It might be applied to more complex scenarios~\cite{CM19}, and --as we will see later on-- is valid not only in the quantum case, but also in its classical counterpart. 

Simply put, the main question in this framework is:
\begin{center}
\textbf{The Coarse-Graining Problem} \\
    ``\emph{given an underlying unitary dynamics $\mc{U}_{t}$ and a coarse-graining $\Lambda$ acting equally on every instant of time, what are the necessary and sufficient conditions for the existence of an emergent map $\Gamma_{t}$ consistent with such a description, or in other words, that makes the diagram in fig.~\ref{Fig_Diagram_Quantum} commutes.}"
\end{center}

The lowermost level of the diagram in Fig.~\ref{Fig_Diagram_Quantum} represents the microscopic description, whereas the uppermost arrow represents the macroscopic description. We will centre our attention on conditions involving the microscopic underlying dynamics $\mc{U}_{t}$ as well as the coarse-graining $\Lambda$ that give rise to a well-defined, emergent macroscopic dynamics $\Gamma_{t}$. We would like to do so regardless of particular choices of initial states. Our main goal is to find out, then, when it is possible to obtain an emergent dynamics $\Gamma_{t}$ as a map:
\begin{equation}
    \sigma_0 \mapsto \sum_{k=1}^{M}\Gamma_{t,k}\sigma_0\Gamma_{t,k}^{\ast},
\end{equation}
with $\sum_{k=1}^{M}\Gamma_{t,k}^{\ast}\Gamma_{t,k}=\mathds{1}$, for all $t$. 

In this work, we are considering temporal evolution as successive applications of quantum channels. Because of that, we will focus on just one arbitrary snapshot. That is to say that, at least in this paper, we will not be interested in the role played by the time, so that for sake of simplicity we will frequently omit the sub-index $t$ from the maps. To uniform the reasoning, we will use $\rho_0,\sigma_0$ to mean ``initial states" and $\rho_t,\sigma_t$ to mean ``the evolved states at an arbitrary time". 

We conclude this section with an example borrowed from ref.~\cite{SRFCMK15}, as we want to reassure that the abstract setup we are exploring here can in fact be useful for more practical tasks.  In ref.~\cite{SRFCMK15} the authors study a general procedure to classify entangled states via a particular coarse-graining map. Their main idea is to map qu$d$its into qu$b$its, and given a classification of the latter to try and obtain new information for the former. Their proposed dichotomization map is:
\begin{equation}
    \rho_{0} \mapsto \Lambda(\rho_0):=\frac{1}{2}\left(\mathds{1} + \frac{2}{D-1}\sum_{i \in [3]} \langle J_i \rangle_{\rho_0}\hat{\sigma}_{i} \right),
    \label{Eq:CGMapIbrahim}
\end{equation}
where $J_1,J_2,J_3$ are the generators of $SU(2)$ rotations around the $x,y,z-$axes (respec.) in the $D-$dimensional Hilbert space $\mc{H}_D$, and $\hat{\sigma}_{1},\hat{\sigma}_{2},\hat{\sigma}_{3}$ are the Pauli matrices. Notice that their $\Lambda$ is nothing but our coarse-graining map (vertical arrows in Fig.~\ref{Fig_Diagram_Quantum}). Additionally, their underlying unitary quantum dynamics is given by:
\begin{equation}
    \rho_0 \mapsto \rho_{t}:= e^{-i\alpha \langle \vec{J}, \vec{n} \rangle}\rho_0 e^{i\alpha \langle \vec{J}, \vec{n} \rangle},
    \label{Eq:UnitaryMapIbrahim}
\end{equation}
or simply put, a rotation $\alpha$ around a given vector $\vec{n}$. Interestingly enough, for this specific situation there is a compatible --also unitary-- macroscopic emergent quantum dynamics:
\begin{equation}
    \sigma_0 \mapsto \sigma_{t}:= e^{-i\alpha \langle \vec{\hat{\sigma}}, \vec{n} \rangle}\sigma_0 e^{i\alpha \langle \vec{\hat{\sigma}}, \vec{n} \rangle}.
    \label{Eq:EmergentDyamicsIbrahim}
\end{equation}

In the case of ref.~\cite{SRFCMK15}, for commutativity reasons, it is possible to ``interchange" the coarse-graining map with the underlying dynamics. The result of that is (very) a well-defined emergent dynamics. As we will see later on this compatibility is not always true, and we hope our methods can be useful to determine whether or not it is possible to obtain such a behaviour.

\subsection{Classical Case - Classical Inference}

The coarse-graining framework we discuss in this paper can be seen as a mechanism to investigate the emergence of effective quantum dynamics. Simply put, such mechanism is completely inspired by the commutativity expressed in the diagram of fig.~\ref{Fig_Diagram_Quantum}, where the vertical arrows represent loss of information or lack of full access to the underlying quantum system. It is possible, though, to draw a similar diagram for the classical inference problem~\cite{SGS00,Pearl2013Book,Kleinberg2015} and also ask a related question about the emergence of a coarser description. It is exactly this classical counterpart we will examine in this subsection. We hope this gives to the reader something more concrete to deal with before diving deeper into the quantum case and its many variations.

Fig.~\ref{Fig_Diagram_Classical} portrays what we mean by classical (inference) scenario. It leverages Pearl's structural model equation (SME)~\cite{Pearl2013Book} and can be understood as follows: whenever the outcomes of a random variable are determined in terms of the outcomes of another, we draw a directed arrow from the the latter to the former. In the example depicted in fig.~\ref{Fig_Diagram_Classical}, there must be functions $f,g,h$ such that $X(\omega)=f(A(\omega)), B=g(A(\omega))$, and $Y=h(B(\omega))$. Non-deterministic cases are treated likewise, and we refer to the standard reference~\cite{Pearl2013Book} for a mathematically rigorous approach to the subject.

In this mechanistic model, the knowledge of the variables at a deeper scale determines completely the knowledge of the coarser variables. However, as we have been discussing, it is not always true we can have access to the deepest possible level of knowledge. It might be the case that it is only granted access to the coarser information, and even though we must infer what is the influence holding in the set of variables we are dealing with.

At any rate, though, as discussed by the authors in~\cite{Pearl2013Book,Pearl1995,Kleinberg2015,BP16} for classical inference purposes we can get rid of the underlying structural model equation system and look only at the causal diagrams the approach entails. They codify all the necessary information we need to decide about causal statements. These causal diagrams are nothing but directed acyclic graphs (DAGs), and following Pearl's approach, it will be on these diagrams we will centre our attention on.          
\begin{center}
    \begin{figure}
        \centering
        \includegraphics[scale=0.2]{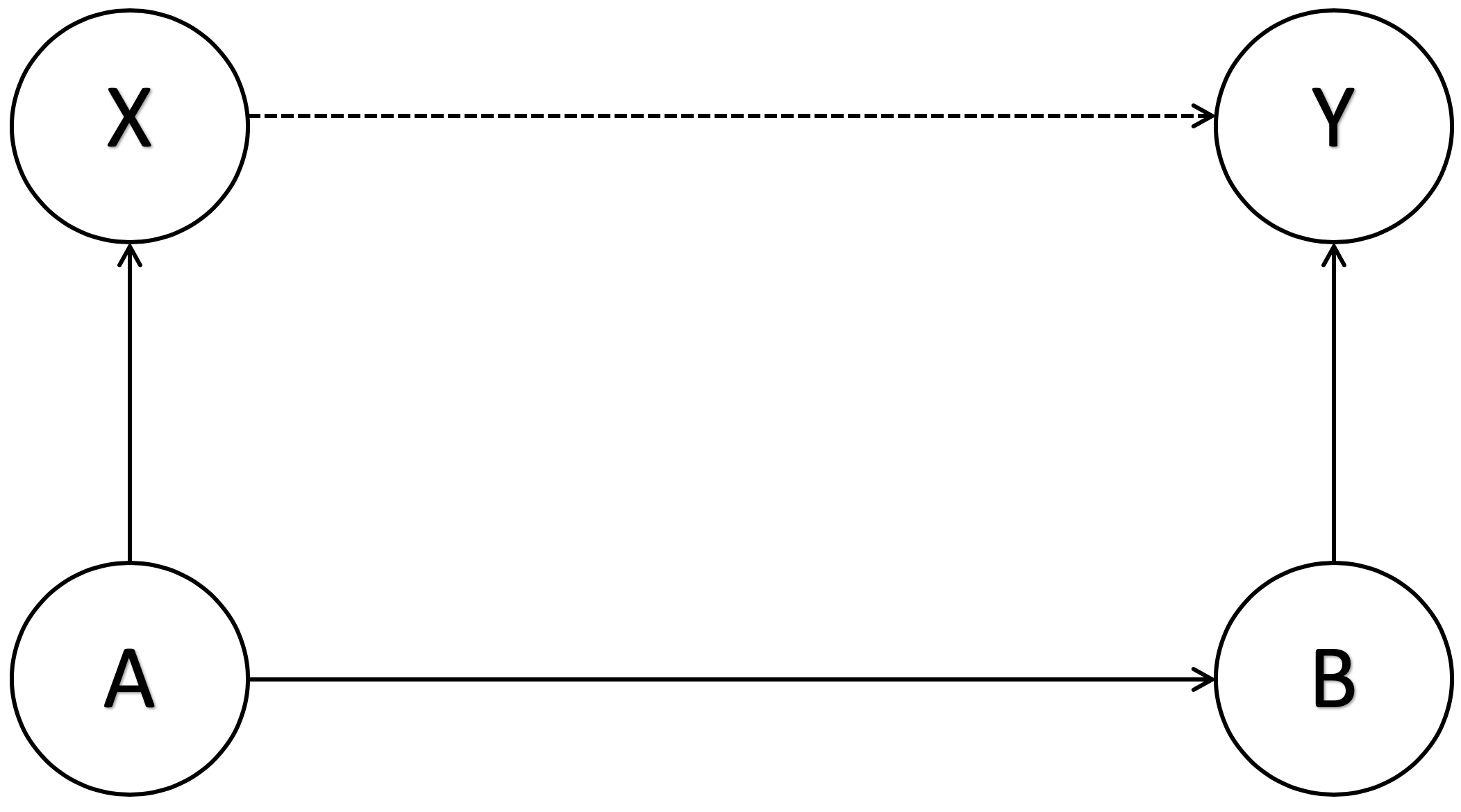}
        \caption{\begin{footnotesize}Classical inference version for the usual quantum coarse-graining scenario. The dotted arrow representing the potential influence $X$ might have on $Y$\end{footnotesize} }
        \label{Fig_Diagram_Classical}
    \end{figure}
\end{center}

As in the quantum case, one can interpret the lower level random variables $A,B$ as some sort of inner property of a system that is not fully available to an external observer, either because they cannot access them completely, or because their measurement apparatus is defective. On the other hand, though, $X$ and $Y$ are the random variables the observer interact with, and gather data from. One can think of them as being proxies for $A$ and $B$. Given that $A$ influences $B$, it is natural to expect that also $X$ influences $Y$ at an emergent level. At a billiard game, for instance, colour is a very rough, coarse-grained description for each ball on the table. Even though, we rely on this very crude description and instead of saying that ``the $10^{23}$ atoms present at a particular region got momentum, evolved, interacted with --transferring momentum to-- other $10^{23}$ atoms on another region making them reach a particular place where they finally disappear", we tend to say ``the white ball knocked off the green ball!". Almost every causal statement, or inference, we make on a daily basis comes from a coarse-grained description of an underlying unattainable system. But how can we transfer this information, making it useful for inferences, from the lower level to the upper most one?

In our naive toy model, the main idea consists of interpreting the information of each random variable in terms of their probability distributions and recognising that influences work like channels, in the very same way we usually do for quantum states and CPTP maps. In this sense, we want to write down $\mathds{P}(Y)$ as function of $\mathds{P}(X)$ in a way that looks like some sort of law of total probability~\cite{GS01} involving also the other probabilities at play. 

As a matter of fact, in clear contrast to the quantum version, we will show that this is always the case: 
\begin{align}
   \mathds{P}(Y)&= \sum_{A,B,X}\mathds{P}(A,B,X,Y)  \nonumber \\
             & = \sum_{A,B,X}\mathds{P}(Y | B)\mathds{P}(B | A) \mathds{P}(X | A)\mathds{P}(A) \nonumber \\
             & = \sum_{A,B,X}\mathds{P}(Y | B)\mathds{P}(B | A) \mathds{P}(A | X)\mathds{P}(X) \nonumber \\
             & = \sum_{X}\tilde{\mathds{P}}(Y | X) \mathds{P}(X),
             \label{Eq:ClassicalApproachInference}
\end{align}
where we have defined
\be
\tilde{\mathds{P}}(Y | X) :=  \sum_{A,B}\mathds{P}(Y|B)\mathds{P}(B | A) \mathds{P}(A | X).
\label{Eq.DefBgivenATilde}
\ee

In summary, eq.~\eqref{Eq:ClassicalApproachInference} says that viewed as an inference problem, the classical version of our quantum scenario always admits an ``arrow" connecting the information contained in $Y$ from $X$ which makes the diagram commutative.

\begin{center}
    \begin{figure}
        \centering
        \includegraphics[scale=0.2]{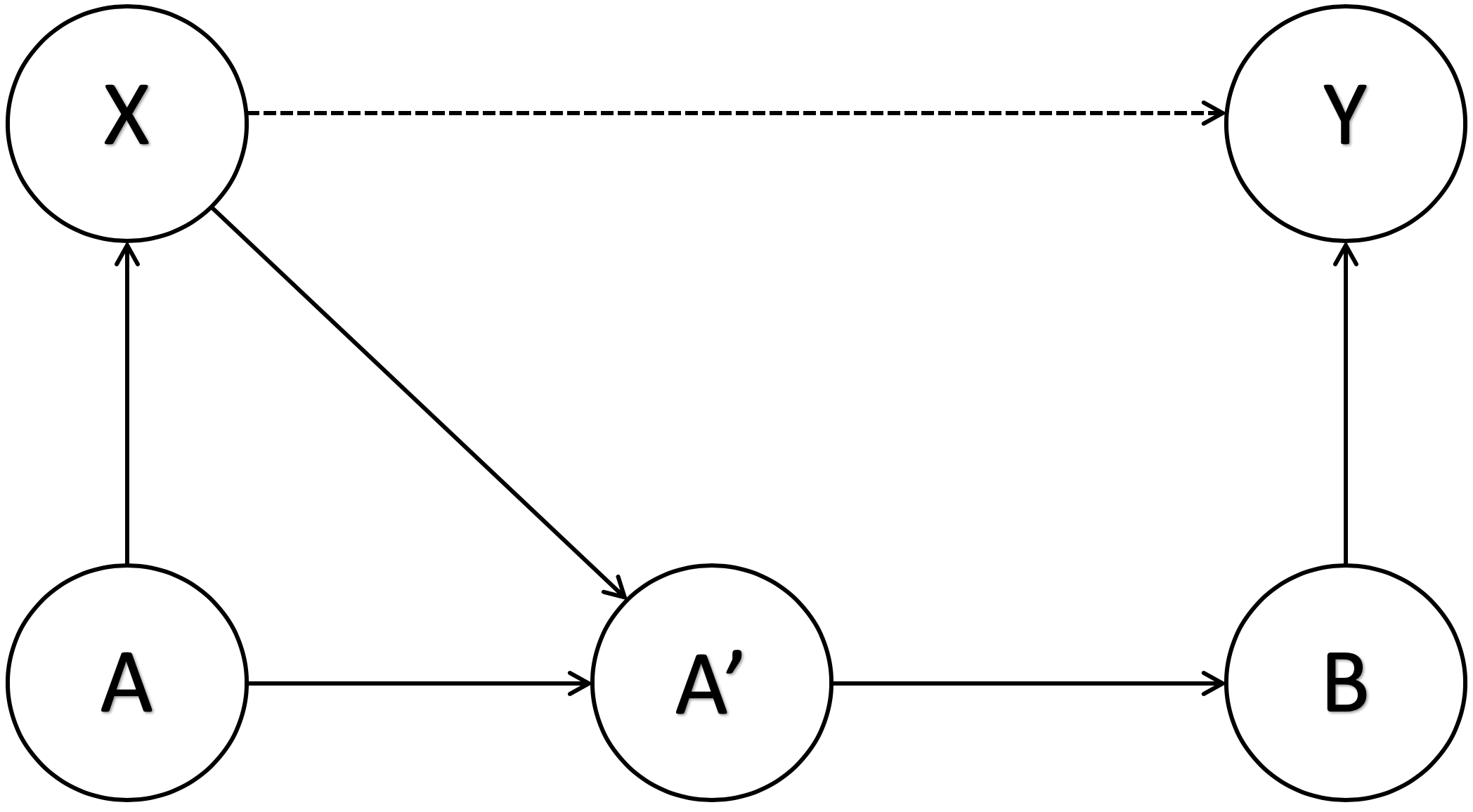}
        \caption{\begin{footnotesize}Slightly modification for the classical coarse-graining diagram. The diagonal arrow means that $X$ also influences $B$. In this picture alterations on $X$ indirectly acts on $B$.\end{footnotesize}}
        \label{Fig_Diagram_Classical_Do}
    \end{figure}
\end{center}

Finally, it is also worth to mention that Pearl's approach is encompassing enough and also provides a slightly different classical analysis of the coarse-graining picture. As a matter of fact, we could have analysed our situation through the lenses of the $do-$calculus~\cite{Pearl2013Book,PM18,Pearl2018,BP16}, and instead of using the language of classical probability channels, we could have studied the influence between $X$ and $Y$ via $\mathds{P}(y | do(X=x))$. We will dedicate this last part of this section for doing so, but we must warn the reader, first, that although inspired on the standard scenario, the $do-$calculus approach we are envisaging here deviates a bit from the standard ones. 

The very first deviation already arises in the diagram we have been work with. In the new diagram (fig.~\ref{Fig_Diagram_Classical_Do}) influences on $B$ are not restricted to the microscopic level alone, as modifications on $X$ are carried out through $A^{\prime}$. This new variable, $A^{\prime}$ is nothing but a proxy for $A$ and has been created only to avoid directed cycles in the causal graph for this scenario. What we are proposing in here intends to naively capture the idea that the macroscopic world also affects the micro-world. 

Other than that, the other point where the approach we are discussing here differs from the previous one is on the usage of the counterfactual $\mathds{P}(Y | do(X=x))$. We refer to ref.~\cite{BP16} for a technical and~\cite{Pearl2018} for a non-technical introduction to the topic. In either case, the main mathematical element we are basing ourselves on is the \emph{$d-$separation} and the \emph{backdoor criterion}~\cite{BP16,Pearl2018}. Putting together, these two results allows to condition on certain variables (variable $B$ in our case) and render the effect of $X$ on $Y$ identifiable: 
\begin{align}
    \mathds{P}&(Y=y | do(X=x)) = \nonumber \\ 
                    &\sum_{b \in \mbox{Out}{B}}\mathds{P}(Y=y | B=b, do(X=x))\mathds{P}(B=b).
                    \label{Eq:ClassicalApproachDoCalculus}
\end{align}

We hope these two classical ways to see the same quantum problem, summarized in Eqs.~\eqref{Eq:ClassicalApproachInference} and~\eqref{Eq:ClassicalApproachDoCalculus}, had given traction to the reader, creating not only an illustrative parallel but also showing fundamental differences between them and the quantum case that we work out in the rest of the paper.


\section{Geometric Approach}\label{Sec.GeoApp}

The geometric viewpoint for the problem starts from the observation that a coarse-graining map is usually a surjective but not injective map $\CG: D_D \rightarrow D_d$.
For each state $\rho \in D_d$ there is a (usually large) set $F\de{\rho} = \DE{\psi\in D_D: \CG \de{\psi} = \rho}$, the fiber over $\rho$~\cite{Hatcher02}.
The existence of effective dynamics on $D_d$ corresponds to the condition that the dynamics on $D_D$ respects such fibering, \ie 
if $\psi$ and $\psi'$ are such that $\CG\de{\psi} = \CG\de{\psi'}$, then \be \CG\de{\Dyn\de{\psi}} = \CG\de{\Dyn\de{\psi'}}.\label{eq:fiber}\ee

\subsection{Example 1}

Consider the coarse-graining $\Lambda$ that takes a three dimensional system to a two dimensional system defined by the  Kraus operators 
\begin{align}
K_0&=\ket{0}\bra{0} + \ket{1}\bra{+}&\\
K_1&=\ket{1}\bra{-}. &
\end{align}
Let us see how the condition \eqref{eq:fiber} constrains the unitaries $U$ acting on the three dimensional system that lead to a well-defined effective dynamics in the two dimensional system.

First we notice that  $U$ must be of the block form
\be U=\begin{bmatrix}
1&0\\
0&U_2\end{bmatrix}\ee
where $U_2$ is a unitary matrix acting on the subspace generated by $\ket{1}$ and $\ket{2}$. Now let us consider the action of $U$ in states of the form
\be \rho=\begin{bmatrix}
1-p&a&b\\
a^*&p\rho_{11}&p\rho_{12}\\
b^*&p\rho_{21}&p\rho_{22}
\end{bmatrix}\ee
writen in the basis $\ket{0}, \ket{+}_{12}, \ket{-}_{12}$, where
\be  \ket{\pm}_{12}=\frac{\ket{1} \pm \ket{2}}{\sqrt{2}}.\ee 
The image of such a state under the action of $\Lambda$ is
\be \begin{bmatrix}
1-p&a\\
a^*&p
\end{bmatrix}.\ee
The image of $\Lambda \circ U$ can not depend on $b$, and this implies that $\ket{\pm}_{12}$ must be eigenvectors of $U_2$,
condition that is also sufficient for $U$ to preserve the equivalence classes defined by $\Lambda$.

\subsection{Example 2}
Consider the coarse-graining $\Lambda$ that takes a $kd$-dimensional system to a $d$-dimensional system such that $k$ levels are mapped to one level of the coarse-grained system, as ilustrated in Fig. \ref{fig:example2}. 

\begin{figure}
    \centering
   \definecolor{yqyqyq}{rgb}{0.5019607843137255,0.5019607843137255,0.5019607843137255}
\definecolor{ududff}{rgb}{0.30196078431372547,0.30196078431372547,1}
\begin{tikzpicture}[scale=3,line cap=round,line join=round,>=triangle 45,x=0.5cm,y=0.5cm]
\draw [line width=1pt] (-8,2.8)-- (-7,2.8);
\draw [line width=1pt] (-8,2.6)-- (-7,2.6);
\draw [line width=1pt] (-8,2.4)-- (-7,2.4);
\draw [line width=1pt] (-8,2)-- (-7,2);
\draw [line width=1pt] (-8,1.8)-- (-7,1.8);
\draw [line width=1pt] (-8,1.6)-- (-7,1.6);
\draw [line width=1pt] (-8,1)-- (-7,1);
\draw [line width=1pt] (-8,0.8)-- (-7,0.8);
\draw [line width=1pt] (-8,0.6)-- (-7,0.6);
\draw [->,line width=1pt,dash pattern=on 1pt off 2pt,color=yqyqyq] (-6.6,2.6) -- (-6,2.6);
\draw [->,line width=1pt,dash pattern=on 1pt off 2pt,color=yqyqyq] (-6.6,1.8) -- (-6,1.8);
\draw [->,line width=1pt,dash pattern=on 1pt off 2pt,color=yqyqyq] (-6.6,0.8) -- (-6,0.8);
\draw [line width=1pt] (-5.6,2.6)-- (-4.6,2.6);
\draw [line width=1pt] (-5.6,1.8)-- (-4.6,1.8);
\draw [line width=1pt] (-5.6,0.8)-- (-4.6,0.8);
\draw [decorate,decoration={brace,amplitude=3pt},xshift=-4pt,yshift=0pt]
(-8,2.3) -- (-8,2.9) node [black,midway,xshift=-0.4cm] 
{\footnotesize $k$};
\begin{scriptsize}
\draw [fill=black] (-7.50363090525873,1.4078825628008227) circle (0.1pt);
\draw [fill=black] (-7.50363090525873,1.3127702929962486) circle (0.1pt);
\draw [fill=black] (-7.50363090525873,1.2118936432035188) circle (0.1pt);
\draw [fill=black] (-5.096793004527191,1.4078825628008227) circle (0.1pt);
\draw [fill=black] (-5.096793004527191,1.3127702929962486) circle (0.1pt);
\draw [fill=black] (-5.096793004527191,1.2118936432035188) circle (0.1pt);
\end{scriptsize}
\end{tikzpicture}
    \caption{Coarse-graining operation where a $kd$-dimensional system is mapped to a $d$-dimensional system such that $k$ levels are mapped to one level of the coarse-grained system.}
    \label{fig:example2}
\end{figure}
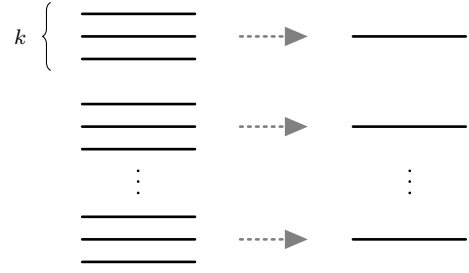

The form of the Kraus maps depend on how coherences are preserved under the action of the coarse-graining map. If coherences are preserved, we have $k$ Kraus maps of the form
    \be K_i=\sum_{j=0}^{d-1}\ket{j}\bra{u_{ij}}.\ee The vectors $\ket{u_{ij}}$ depend on how coherences transform.

If coherences are not preserved, then we have $kd$ Kraus operators of the form 
  \be K_{ij}=\ket{i}\bra{u_{ij}}.\ee
  
  If some coherences are preserved but not others, we will have something intermediate between one and $d$ Kraus operators for each $k$ level subsystem of dimension $k$ on the original system.
  
  In order to preserve the equivalence classes defined by $\Lambda$, a unitary $U$ must be of block form
  \be U=\begin{bmatrix}
  U_0&0&\ldots &0\\ 
  0&U_1&\ldots &0\\
  \vdots &\vdots &\vdots &\vdots \\
  0&0&\ldots &U_{d-1} 
  \end{bmatrix}.\label{eq:blockunitary}\ee
  This condition is necessary but not sufficient. Extra constrains on the blocks $U_i$ depend on the number of Kraus operators and the vectors $\ket{u_{ij}}$, as we show in the particular case below.

Consider the coarse-graining $\Lambda$ of a $4$-dimensional Hilbert space with $k=2$ given by the Kraus operators
\be K_0 = \ket{0}\bra{+}_{01} +\ket{1}\bra{+}_{23}\ee 
\be K_1 = \ket{0}\bra{-}_{01} +\ket{1}\bra{-}_{23}\ee
with $\ket{\pm}_{01}=\frac{\ket{0}\pm\ket{1}}{\sqrt{2}}$ and $\ket{\pm}_{23}=\frac{\ket{2}\pm\ket{3}}{\sqrt{2}}$.

Writing $\rho$ using the basis $\left\{\ket{\pm}_{01},\ket{\pm}_{23}\right\}$  we have that 
\be \Lambda\left(\rho\right)=\begin{bmatrix}\rho_{00}+\rho_{11}&\rho_{02}+\rho_{13}\\ \rho_{20}+\rho_{31}&\rho_{11}+\rho_{33}\end{bmatrix}\ee
Let $\rho$ and $\sigma$ be matrices such that $\Lambda\left(\rho\right)=\Lambda\left(\sigma\right)$. Taking $U$ of the form \eqref{eq:blockunitary}, we have that $\Lambda\left(U\rho U^\ast\right)=\Lambda\left(U\sigma U^\ast\right)$ if and only if 
\be \tr \left(U_1C_{\rho}U_2^\ast\right)=\tr\left(U_1C_{\sigma}U_2^\ast\right).\ee
 where \be C=\begin{bmatrix}\rho_{02}&\rho_{03}\\
\rho_{12}&\rho_{13}\end{bmatrix}\ee is the coherence matrix between states $\ket{\pm}_{01}$ and $\ket{\pm}_{23}$ for matrix $\rho$ and similar for $C_{\sigma}$. This in turn implies that the matrix form of $U_1$ with respect to the basis $\ket{\pm}_{01}$ and the matrix form of $U_2$ with respect to the basis $\ket{\pm}_{23}$ must be equal.


\section{Algebraic Approach}\label{Sec.AlgApp}

We dedicate this section for an algebraic-like approach to our main problem. Our principal question remains the same: we would like to find out conditions for the existence of the effective map $\Gamma$ connecting the upper level of the diagram in fig.\ref{Fig_Diagram_Quantum}. Although connected with 
the previous section, the results we show here may be regarded as being a bit more explicit than those we have provided through geometrical arguments.

To begin with, we will first obtain a condition guaranteeing the existence of such map as a well-defined function~\cite{DSW18,DCBM17}.  

\begin{thm}
$\Gamma_{t}$ is a well-defined function if, and only if, 
\begin{equation}
\Lambda(\mathcal{U}_t\rho_0)= \Lambda(\mathcal{U}_t\tilde{\rho}_0),
\label{Eq.ConditionOnU}
\end{equation}
whenever $\Lambda(\rho_0)=\Lambda(\tilde{\rho}_0)$.
\label{Thm.PreserveEquivalenceClass}
\end{thm}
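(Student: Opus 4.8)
The plan is to prove both directions of the biconditional by unpacking what it means for $\Gamma_t$ to be \emph{well-defined} as a function on $D_d$. The map $\Gamma_t$ is supposed to be the unique assignment making the diagram in Fig.~\ref{Fig_Diagram_Quantum} commute, i.e.\ $\Gamma_t \circ \Lambda = \Lambda \circ \mathcal{U}_t$ on states. Since $\Lambda$ is surjective (but not injective), any $\sigma_0 \in D_d$ has the form $\sigma_0 = \Lambda(\rho_0)$ for some $\rho_0 \in D_D$, so the only possible definition is $\Gamma_t(\sigma_0) := \Lambda(\mathcal{U}_t \rho_0)$. The content of the theorem is that this prescription is consistent — independent of the choice of preimage $\rho_0$ — exactly when condition~\eqref{Eq.ConditionOnU} holds.

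First I would argue the ``if'' direction. Assume $\Lambda(\mathcal{U}_t\rho_0) = \Lambda(\mathcal{U}_t\tilde\rho_0)$ whenever $\Lambda(\rho_0) = \Lambda(\tilde\rho_0)$. Given $\sigma_0 \in D_d$, pick any $\rho_0$ with $\Lambda(\rho_0) = \sigma_0$ and set $\Gamma_t(\sigma_0) := \Lambda(\mathcal{U}_t\rho_0)$. If $\tilde\rho_0$ is another preimage, then $\Lambda(\rho_0) = \sigma_0 = \Lambda(\tilde\rho_0)$, so by hypothesis $\Lambda(\mathcal{U}_t\rho_0) = \Lambda(\mathcal{U}_t\tilde\rho_0)$; hence the value $\Gamma_t(\sigma_0)$ does not depend on the representative, and $\Gamma_t$ is a well-defined function. (One should also note this is precisely the fiber-preservation condition~\eqref{eq:fiber} with $\Dyn$ the unitary channel.)

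For the ``only if'' direction, suppose $\Gamma_t$ is a well-defined function satisfying $\Gamma_t \circ \Lambda = \Lambda \circ \mathcal{U}_t$. Take any $\rho_0, \tilde\rho_0$ with $\Lambda(\rho_0) = \Lambda(\tilde\rho_0)$. Then
\begin{equation}
\Lambda(\mathcal{U}_t\rho_0) = \Gamma_t(\Lambda(\rho_0)) = \Gamma_t(\Lambda(\tilde\rho_0)) = \Lambda(\mathcal{U}_t\tilde\rho_0),
\end{equation}
which is exactly~\eqref{Eq.ConditionOnU}. This direction is essentially immediate once the commutation relation defining $\Gamma_t$ is made explicit.

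The argument is short and the only genuine subtlety — the main ``obstacle,'' such as it is — lies in being careful about what is being asserted: the theorem concerns $\Gamma_t$ being a well-defined \emph{function} on density operators, not yet a CPTP map (linearity, complete positivity, and trace preservation are separate questions addressed later). I would make explicit that surjectivity of $\Lambda$ guarantees existence of a preimage for every $\sigma_0$, so that $\Gamma_t$ is defined on all of $D_d$, and that the consistency condition is the $\Longleftrightarrow$ obstruction to the construction. It is also worth remarking that the stated condition only needs to be checked on pairs $\rho_0,\tilde\rho_0$ lying in a common fiber of $\Lambda$, which connects directly to the equivalence-class language used in the examples above.
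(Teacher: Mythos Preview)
Your proof is correct and follows essentially the same route as the paper: the ``only if'' direction is verbatim the commutation chain $\Lambda(\mathcal{U}_t\rho_0)=\Gamma_t(\Lambda(\rho_0))=\Gamma_t(\Lambda(\tilde\rho_0))=\Lambda(\mathcal{U}_t\tilde\rho_0)$, and the ``if'' direction defines $\Gamma_t$ via $\sigma_0\mapsto(\Lambda\circ\mathcal{U}_t)(\Lambda^{-1}(\sigma_0))$ and checks preimage-independence (the paper phrases this last step by contradiction, you do it directly, but the content is identical). Your explicit remarks on surjectivity of $\Lambda$ and on the distinction between well-definedness as a function versus as a CPTP map are helpful clarifications that the paper leaves implicit.
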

\begin{proof}
$(\Rightarrow)$ 
Suppose that there exists a well-defined function $\Gamma_{t}$ making the diagram commutative. In this case, whenever we take $\rho_0,\tilde{\rho}_0 \in \mathcal{H}_{D}$ satisfying $\Lambda(\rho_0)=\sigma_0=\Lambda(\tilde{\rho}_0)$ we know that
\begin{equation}
    \Gamma_t(\sigma_0)=\Gamma_t(\Lambda(\rho_0))=\Lambda(\mathcal{U}_t(\rho_0))
    \label{Eq.EqThmPreserveCond1}
\end{equation}
and also that
\begin{equation}
    \Gamma_t(\sigma_0)=\Gamma_t(\Lambda(\tilde{\rho}_0))=\Lambda(\mathcal{U}_t(\tilde{\rho}_0)).
    \label{Eq.EqThmPreserveCond2}
\end{equation}
Therefore,  $\Lambda(\mathcal{U}_t\rho_0)= \Lambda(\mathcal{U}_t\tilde{\rho}_0)$.

$(\Leftarrow)$
Now, let us assume that $\Lambda(\mathcal{U}_t\rho_0)= \Lambda(\mathcal{U}_t\tilde{\rho}_0),$ when $\Lambda(\rho_0)=\Lambda(\tilde{\rho}_0)$. We must show that for each vector in $\Lambda(\mathcal{H}_{D}) \subset \mathcal{H}_{d}$ there is only one vector lying also in $\mathcal{H}_d$, assigned by $\Gamma_t$, that makes the diagram consistent. For each $\sigma_0$ define the following assignment:
\begin{equation}
\sigma_0 \mapsto (\Lambda \circ \mathcal{U}_t) \circ [\Lambda^{-1}(\sigma_0)].
\label{Eq.ThmAssociation}
\end{equation}
We affirm that the above mapping does define a function making the diagram commutative. Otherwise, it would exist one $\sigma_0$ associated with two different vectors through Eq.~\eqref{Eq.ThmAssociation}, and in addition it would also exist at least $\rho_0 \neq \tilde{\rho}_0$ belonging to $\Lambda^{-1}(\sigma_0)$ such that $\Lambda(\mathcal{U}_t \rho_0)=\Lambda(\mathcal{U}_t\tilde{\rho}_0)$, as both $\rho_0$ and $\tilde{\rho}_0$ belong to the inverse image of $\Lambda$ and therefore preserved by $\mathcal{U}_t$, an absurd.
\end{proof}

Remarkably, whenever we have $\Gamma_t$ as a well defined function, consistent with the diagram, it will also be linear and positive. It suffices to follow the bottom part of the diagram to see that it holds true:
\begin{align}
\Gamma_t(a\sigma_0 &+ b\tilde{\sigma}_0)= \nonumber \\ &= \Gamma_t(a\Lambda(\rho_0)+b\Lambda(\tilde{\rho}_0)) \nonumber \\
& =\Gamma_t(\Lambda(a\rho_0+b\tilde{\rho}_0)) \nonumber \\
& = \Lambda(\mathcal{U}_t(a\rho_0+b\tilde{\rho}_0)) \nonumber \\
&=a\Lambda \circ \mathcal{U}_t(\rho_0) + b\Lambda \circ \mathcal{U}_t(\tilde{\rho}_0) \nonumber \\
&= a\Gamma_{t}(\sigma_0) + b\Gamma_{t}(\tilde{\sigma}_0), \,\, a,b \in \mathds{C}
\end{align}
and

\begin{align}
    \Gamma_t(\sigma_0) &=\Gamma_{t}(\Lambda(\rho_0)) \nonumber \\
    &=(\Lambda \circ \mathcal{U}_t)(\rho_0) \geq 0, \,\, \forall \sigma \in \Lambda(\mathcal{D}(\mathcal{H}_D)). 
\end{align}
In plain English, thm.~\ref{Thm.PreserveEquivalenceClass} says that the existence of an well-defined effective dynamics is guaranteed when there exists some sort of good matching between the underlying unitary $\mathcal{U}_t$ and the coarse-graining $\Lambda$. More precisely, whenever the unitary preserves --in the sense of Eq.~\eqref{Eq.ConditionOnU}-- inverse images, a linear and positive effective dynamics might be defined. 

Exploring a bit further this idea of preservation, or in other words this well-formed-matching between $\Lambda$ and $\mc{U}_{t}$, we could ask whether there is some more meaningful condition also implying the existence of an effective dynamics. The answer of this question is at the heart of the following theorem:

\begin{thm}
Let $\{M_k\}_{k=1}^{N}$ be the Kraus decomposition for the coarse-graining $\Lambda$, \emph{i.e.}
\begin{equation}
    \Lambda(\rho)=\sum_{k \in [N]}M_k \rho M_{k}^{\ast}, \,\, \forall \rho \in \mathcal{D}(\mathcal{H}_D)
\end{equation}
If there exists $V \in \mathds{M}_{d \times d}$ such that 
\begin{equation}
    M_kU_t=VM_k, \,\, \mbox{for all} \,\, k \in [N]
    \label{Eq.ThmCommutativity}
\end{equation}
then it is possible to define an effective map $\Gamma_t$ making the diagram commutative.
\label{Thm.CommutativityUandVandM}
\end{thm}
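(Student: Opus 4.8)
The plan is to produce $\Gamma_t$ explicitly as conjugation by $V$ and then verify it closes the diagram. Concretely, set $\Gamma_t(\sigma) := V\sigma V^{\ast}$, a map with the single Kraus operator $V$, and check that $\Gamma_t\circ\Lambda = \Lambda\circ\mathcal{U}_t$.

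First I would verify commutativity by a one-line termwise substitution. Writing $\mathcal{U}_t(\rho)=U_t\rho U_t^{\ast}$ and using the hypothesis $M_kU_t=VM_k$ together with its adjoint $U_t^{\ast}M_k^{\ast}=M_k^{\ast}V^{\ast}$, one gets $\Lambda(\mathcal{U}_t\rho)=\sum_{k\in[N]}M_kU_t\rho U_t^{\ast}M_k^{\ast}=\sum_{k\in[N]}VM_k\rho M_k^{\ast}V^{\ast}=V\!\left(\sum_{k\in[N]}M_k\rho M_k^{\ast}\right)\!V^{\ast}=V\Lambda(\rho)V^{\ast}=\Gamma_t(\Lambda(\rho))$ for every $\rho\in\mathcal{D}(\mathcal{H}_D)$, which is exactly the commutativity of Fig.~\ref{Fig_Diagram_Quantum}. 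Equivalently, this same computation shows $\Lambda(\mathcal{U}_t\rho_0)$ depends on $\rho_0$ only through $\Lambda(\rho_0)$, so Theorem~\ref{Thm.PreserveEquivalenceClass} already hands us a well-defined (and, by the remark following it, linear and positive) $\Gamma_t$, after which the display above identifies it with $V(\cdot)V^{\ast}$ using surjectivity of $\Lambda$.

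Second I would check that $\Gamma_t$ is an admissible effective dynamics, i.e.\ CPTP. Complete positivity is immediate since $\Gamma_t$ has the single Kraus operator $V$. Trace preservation is the one spot that needs more than the bare intertwining relation: conjugating the Kraus identity $\sum_k M_k^{\ast}M_k=\mathds{1}_D$ by $U_t$ and applying the hypothesis only yields $\sum_k M_k^{\ast}V^{\ast}VM_k=\mathds{1}_D$, which does not by itself give $V^{\ast}V=\mathds{1}_d$. Instead I would use that $\Lambda$ is trace preserving and, being a coarse-graining, surjective onto $\mathcal{D}(\mathcal{H}_d)$: since $\Gamma_t\circ\Lambda=\Lambda\circ\mathcal{U}_t$ is trace preserving and $\Lambda$ is onto, $\Gamma_t$ preserves trace on all density operators, hence by linearity on all of $\mathds{M}_{d\times d}$, so $\tr(V^{\ast}V\sigma)=\tr\sigma$ for every $\sigma$ and therefore $V^{\ast}V=\mathds{1}_d$. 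Thus $V$ is in fact unitary and $\Gamma_t(\cdot)=V(\cdot)V^{\ast}$ is a bona fide (indeed unitary) quantum channel.

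The main obstacle — really the only non-mechanical step — is this trace-preservation check: the hypothesis pins $V$ down only on the joint range of the $M_k$, so one has to feed in trace preservation and surjectivity of $\Lambda$ to upgrade the intertwiner $V$ into an honest unitary, which is what makes $\Gamma_t$ qualify as a CPTP effective map rather than merely a diagram-closing linear map. Everything else reduces to the termwise substitution above.
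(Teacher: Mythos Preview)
Your proposal is correct and the core computation---using $M_kU_t=VM_k$ termwise to obtain $\Lambda(\mathcal{U}_t\rho)=V\Lambda(\rho)V^{\ast}$---is exactly the paper's argument, which then invokes Theorem~\ref{Thm.PreserveEquivalenceClass} just as you note. You go further than the paper by explicitly identifying $\Gamma_t=V(\cdot)V^{\ast}$ and arguing it is CPTP via surjectivity of $\Lambda$; the paper's proof stops at well-definedness (linearity and positivity then follow from the remark after Theorem~\ref{Thm.PreserveEquivalenceClass}) and does not address complete positivity or trace preservation at all.
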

\begin{proof}
Using thm.~\ref{Thm.PreserveEquivalenceClass}, we know it suffices to prove that $\mc{U}_t$ preserves the inverse image of $\Lambda$. Let, then, $\rho_0,\tilde{\rho}_0 \in \Lambda^{-1}(\sigma_0)$:
\begin{align}
    \Lambda(\mc{U}_t(\rho_0))&=\sum_{k \in [N]}M_kU_t\rho_0U_{t}^{\ast}M_{k}^{\ast} \nonumber \\
    &=\sum_{k \in [N]}VM_k\rho_0M_{k}^{\ast}V \nonumber \\
    &=\sum_{k \in [N]}VM_k\tilde{\rho}_0M_{k}^{\ast}V \nonumber \\
    &=\sum_{k \in [N]}M_kU_t\rho_0U_{t}^{\ast}M_{k}^{\ast} \nonumber \\
    &= \Lambda(\mc{U}_t(\tilde{\rho}_0))
    \end{align}
\end{proof}
We should point out, though, that whenever the commutativity expressed in eq.~\eqref{Eq.ThmCommutativity} holds true, we may go even further and by recalling that the Kraus decomposition of a CPTP map satisfies $\sum_{k=1}^{N}M_{k}^{\ast}M_{k}=\mathds{1}$ we end up getting:
\begin{align}
    M_kU_t =& VM_k, \,\, \forall k \nonumber \\ 
    & \Longrightarrow M_{k}^{\ast}M_kU_t=M_{k}^{\ast}VM_k, \,\, \forall k \nonumber \\
    &\Longrightarrow \sum_{k \in [N]}M_{k}^{\ast}M_kU_t=\sum_{k \in [N]} M_{k}^{\ast}VM_k \nonumber \\
    &\Longrightarrow U_t = \sum_{k \in [N]} M_{k}^{\ast}VM_k.
    \label{Eq.ReinterpretingCommCondition}
\end{align}

In conclusion, whenever eq.~\eqref{Eq.ThmCommutativity} is valid, it is possible to interpret the content of its relation as saying that we can see the underlying unitary as arising from the application of the dual $\Lambda^{\ast}$ on an fictitious matrix $V$ in $\mathds{V}_{d \times d}$ acting on the upper-most level. Remarkably, this result shows that the action of the dual also holds relevant physical information about the emergent dynamics scenario.


\section{SDP Approach}\label{Sec.SDPApp}

In this section we explore a third different point of view that is useful to come up with necessary and sufficient conditions for the existence --in general-- of an effective map $\Gamma_{t}$ consistent with the diagram contained in fig.~\ref{Fig_Diagram_Quantum}. 

The strategy we implement here employs the connection~\cite{BD16,BMA17,LHW18,RBY18,CRE18,GBV18} between divisibility of quantum dynamical maps and non-increasing of information. More precisely, we show how our scenario can be understood within the usual framework of deciding whether or not a given dynamics is Markovian~\cite{RHP14}, and then by using the idea of a \emph{completely information decreasing map}~\cite{BD16} we will also show how the non-emergence of an effective map can be witnessed through a violation of an information-theoretical inequality.

We start with the following result from ref.~\cite{BD16}:

\begin{thm}
A given discrete-time quantum dynamical map, represented by a family of CPTP maps
\be
\mc{N}_{i}: \mc{D}(\mc{H}_{D}) \longrightarrow \mc{D}(\mc{H}_{d_i}), \,\, i \in \,\, [T],
\ee
is divisible if, and only if, for any auxiliary Hilbert space $\mc{H}_{N}$ and for any finite ensemble $\mc{E}=\{p_x,\rho_{DN}^{x}\}_{x}$ in $\mc{H}_{D} \otimes \mc{H}_{N}$ the following chain of inequalities hold true:
\be 
P_{\mbox{guess}}(\mc{E}_1) \geq P_{\mbox{guess}} (\mc{E}_{2}) \geq ...  \geq P_{\mbox{guess}} (\mc{E}_{T}),
\label{Eq.OriginalIneqsBuscemiDatta}
\ee
\label{Thm.BuscemiDatta}
where 
\be
P_{\mbox{guess}}(\mc{E}_{i}):= \max_{\{\Pi_{x}\}}\sum_{x}p_{x}Tr[\Pi_{x}(\mc{N}_{i} \otimes \mbox{id})(\rho_{DN}^{x})],
\ee
with maximum being taken over all POVM's acting on the product space $\mc{H}_{D} \otimes \mc{H}_{N}$.
\end{thm}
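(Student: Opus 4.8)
The plan is to prove the two implications separately, with the convention that the family $\mc{N}_i\colon\mc{D}(\mc{H}_D)\to\mc{D}(\mc{H}_{d_i})$ is \emph{divisible} when for each $i$ there is a CPTP map $\mc{M}_i\colon\mc{D}(\mc{H}_{d_i})\to\mc{D}(\mc{H}_{d_{i+1}})$ with $\mc{N}_{i+1}=\mc{M}_i\circ\mc{N}_i$ (composing these recovers a factorisation $\mc{N}_j=\mc{M}_{j-1}\circ\cdots\circ\mc{M}_i\circ\mc{N}_i$ for every $j>i$, so consecutive factorisation is the right notion). The implication ``divisible $\Rightarrow$ inequalities'' is just data processing for the guessing probability, while ``inequalities $\Rightarrow$ divisible'' is a converse-to-data-processing statement that I would establish by a Hahn--Banach separation argument, in the spirit of the quantum Blackwell--Sherman--Stein theorem for the comparison of channels.

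\emph{Forward direction.} Assume divisibility and fix an ensemble $\mc{E}=\{p_x,\rho_{DN}^x\}_x$ and an index $i$. Because $\mc{N}_{i+1}=\mc{M}_i\circ\mc{N}_i$, each element of $\mc{E}_{i+1}$ is obtained from the corresponding element of $\mc{E}_i$ by applying $\mc{M}_i\otimes\mbox{id}$. Given any POVM $\{\Pi_x\}$ on $\mc{H}_{d_{i+1}}\otimes\mc{H}_N$, the operators $(\mc{M}_i^{\ast}\otimes\mbox{id})(\Pi_x)$ form a POVM on $\mc{H}_{d_i}\otimes\mc{H}_N$, since the adjoint of a CPTP map is completely positive and unital; and by the defining property of the adjoint they give exactly the same value $\sum_x p_x\mbox{Tr}[\Pi_x(\mc{N}_{i+1}\otimes\mbox{id})(\rho_{DN}^x)]$ when evaluated on $\mc{E}_i$. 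Taking the maximum over POVMs on both sides yields $P_{\mbox{guess}}(\mc{E}_i)\ge P_{\mbox{guess}}(\mc{E}_{i+1})$, and iterating over $i$ gives the chain~\eqref{Eq.OriginalIneqsBuscemiDatta}.

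\emph{Reverse direction.} It is enough to show that, for consecutive maps, the hypothesis forces the existence of a CPTP $\mc{M}_i$ with $\mc{N}_{i+1}=\mc{M}_i\circ\mc{N}_i$. Fix $i$ and abbreviate $\mc{N}=\mc{N}_i$, $\mc{N}'=\mc{N}_{i+1}$. Let $\mc{S}=\{\mc{M}\circ\mc{N}:\mc{M}\ \mbox{CPTP}\}$; in the Choi picture $\mc{S}$ is the image of the compact convex set of Choi matrices of CPTP maps under a linear map (composition is linear in the free argument), hence compact and convex. Suppose $\mc{N}'\notin\mc{S}$. By strict Hahn--Banach separation there is a Hermitian $W$ on $\mc{H}_{d_{i+1}}\otimes\mc{H}_N$, with $\mc{H}_N\cong\mc{H}_D$ the reference of the Choi construction, such that $\mbox{Tr}[W\,C_{\mc{N}'}]>\mbox{Tr}[W\,C_{\mc{M}\circ\mc{N}}]$ for every CPTP $\mc{M}$, where $C_{(\cdot)}$ denotes the Choi matrix. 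I would then repackage $W$, together with the maximally entangled reference state, into a genuine ensemble $\mc{E}^{\star}$ on $\mc{H}_D\otimes\mc{H}_N$ and a genuine POVM on $\mc{H}_{d_{i+1}}\otimes\mc{H}_N$: decompose $W=W_{+}-W_{-}$ into its positive and negative parts, rescale and complete $W_{+}$, $W_{-}$ to a POVM, and let the weights and states of $\mc{E}^{\star}$ encode the corresponding slices of the maximally entangled state. Under this dictionary the separation inequality reads precisely $P_{\mbox{guess}}(\mc{E}^{\star}_{i+1})>P_{\mbox{guess}}(\mc{E}^{\star}_i)$ --- one uses that the POVM built from $W$ lower-bounds $P_{\mbox{guess}}$ on the $\mc{N}'$ side, whereas the separation upper-bounds $P_{\mbox{guess}}$ on the $\mc{N}$ side over \emph{all} POVMs, in particular the optimal one --- contradicting the assumed chain. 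Hence $\mc{N}'\in\mc{S}$, and composing the resulting channels over $i$ gives divisibility.

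\emph{Main obstacle.} The crux is this last repackaging step: converting the abstract separating functional $W$ into an \emph{admissible} ensemble and POVM, so that the separation inequality becomes literally an instance of the reversed guessing-probability chain. This demands care with (i) the positivity and normalisation constraints that define ensembles and POVMs, which Hahn--Banach does not supply for free; (ii) the choice of reference dimension, where one must check that $\dim\mc{H}_N=\dim\mc{H}_D$ suffices, exactly as in the Choi correspondence; and (iii) the fact that $P_{\mbox{guess}}$ is itself a maximisation, so the construction has to be arranged so that the inequality survives taking maxima on both sides simultaneously. The forward direction and the reduction of divisibility to consecutive factorisations are, by comparison, routine.
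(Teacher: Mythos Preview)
The paper does not prove this theorem at all: it is imported verbatim from ref.~\cite{BD16} (Buscemi--Datta), introduced with ``We start with the following result from ref.~\cite{BD16}'', and then immediately used as a black box to obtain Corollary~\ref{Cor.BuscemiAndDattaInOurCase}. So there is no in-paper proof to compare against; your sketch goes well beyond what the paper itself supplies.

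That said, your outline is exactly the architecture of the original Buscemi--Datta argument: data processing for the forward direction, and a converse-to-data-processing via a separation/Blackwell-type comparison for the reverse. The forward direction is clean and correct as written. For the reverse direction, your diagnosis of the ``main obstacle'' is accurate, but as stated the sketch does not yet close the gap you yourself flag in point~(iii). The separation gives a single Hermitian $W$ with $\mbox{Tr}[W\,C_{\mc{N}'}]>\sup_{\mc{M}}\mbox{Tr}[W\,C_{\mc{M}\circ\mc{N}}]$; to turn this into a reversed guessing-probability inequality you must argue that the supremum over CPTP $\mc{M}$ on the right dominates the supremum over \emph{all} POVMs on $\mc{H}_{d_i}\otimes\mc{H}_N$ appearing in $P_{\mbox{guess}}(\mc{E}_i^{\star})$. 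This is not automatic from the $W_{+}-W_{-}$ decomposition alone: one needs the additional observation that, for the ensemble built from slices of the maximally entangled state, any POVM on the $\mc{N}$ side can be absorbed into a measure-and-prepare channel $\mc{M}$, so that the POVM optimisation is subsumed by the channel optimisation. Once that identification is made explicit --- and once you fix $\dim\mc{H}_N=\dim\mc{H}_D$ so that the Choi dictionary applies --- the argument goes through. Without it, the step ``the separation upper-bounds $P_{\mbox{guess}}$ on the $\mc{N}$ side over all POVMs'' is asserted rather than shown.
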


\begin{cor}
Let $\mc{U}_{t}$ be a unitary map acting on $\mc{D}(\mc{H}_{D})$ and $\Lambda: \mc{D}(\mc{H}_{D}) \longrightarrow \mc{D}(\mc{H}_{d})$ be a CPTP map representing the effect of a coarse-graining. There exists an emergent dynamics described by a CPTP map 
\be
\Gamma_{t}: \mc{D}(\mc{H}_{d}) \longrightarrow \mc{D}(\mc{H}_{d})
\ee
if, and only if, for all finite-dimensional Hilbert spaces $\mc{H}_{N}$ and for all ensembles $\{p_{x},\rho_{DN}^{x}\}_{x}$ in $\mc{H}_{D} \otimes \mc{H}_{N}$ the following inequality holds true
\begin{align}
\max_{\{\Pi_{x}\}}&\sum_{x}p_x \mbox{Tr}[\Pi_{x}(\Lambda \otimes \mbox{id})(\rho_{DN}^{x})] \geq \nonumber \\
&\geq \max_{\{\Pi_{x}\}}\sum_{x}p_x \mbox{Tr}[\Pi_{x}(\Lambda \otimes \mbox{id})(\mc{U}_{t} \otimes \mbox{id})(\rho_{DN}^{x})],
\label{Eq.CoroIneqBuscemiOurCase}
\end{align}
where the maximum is taken over the set of all POVM's acting on $\mc{H}_{d} \otimes \mc{H}_{N}$.
\label{Cor.BuscemiAndDattaInOurCase}
\end{cor}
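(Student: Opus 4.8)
The plan is to recognise the corollary as the specialisation of Theorem~\ref{Thm.BuscemiDatta} to a two-step dynamical map built out of $\Lambda$ and $\mc{U}_t$. First I would set $\mc{N}_1 := \Lambda$ and $\mc{N}_2 := \Lambda \circ \mc{U}_t$, both regarded as CPTP maps $\mc{D}(\mc{H}_D) \to \mc{D}(\mc{H}_d)$ (here $\mc{U}_t$ denotes the unitary channel $\rho \mapsto U_t \rho U_t^{\ast}$, which is CPTP, so $\mc{N}_2$ is a composition of channels and hence CPTP). This is a legitimate discrete-time quantum dynamical map in the sense of Theorem~\ref{Thm.BuscemiDatta}, with $T = 2$ and $d_1 = d_2 = d$.

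Second, I would prove the key equivalence: the family $\{\mc{N}_1,\mc{N}_2\}$ is divisible if and only if there is a CPTP map $\Gamma_t$ making the diagram of Fig.~\ref{Fig_Diagram_Quantum} commute. For a two-element family, divisibility means exactly that there exists a CPTP map $\mc{M} : \mc{D}(\mc{H}_d) \to \mc{D}(\mc{H}_d)$ with $\mc{N}_2 = \mc{M} \circ \mc{N}_1$, i.e. $\Lambda \circ \mc{U}_t = \mc{M} \circ \Lambda$; and commutativity of the diagram is precisely the relation $\Lambda \circ \mc{U}_t = \Gamma_t \circ \Lambda$. One therefore takes $\Gamma_t := \mc{M}$ in one direction and $\mc{M} := \Gamma_t$ in the other. (Consistency with Theorem~\ref{Thm.PreserveEquivalenceClass} is automatic: any such $\mc{M}$ in particular sends $\Lambda(\rho_0)=\Lambda(\tilde\rho_0)$ to $\Lambda(\mc{U}_t\rho_0)=\Lambda(\mc{U}_t\tilde\rho_0)$.)

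Third, I would invoke Theorem~\ref{Thm.BuscemiDatta} with $T = 2$: the chain~\eqref{Eq.OriginalIneqsBuscemiDatta} collapses to the single inequality $P_{\mbox{guess}}(\mc{E}_1) \geq P_{\mbox{guess}}(\mc{E}_2)$, required for every auxiliary Hilbert space $\mc{H}_N$ and every finite ensemble $\mc{E} = \{p_x,\rho_{DN}^{x}\}_x$ in $\mc{H}_D \otimes \mc{H}_N$. Since $\mc{E}_i$ is the ensemble $\{p_x,(\mc{N}_i \otimes \mbox{id})(\rho_{DN}^{x})\}_x$, with the output now living in $\mc{H}_d \otimes \mc{H}_N$, unfolding the definition of $P_{\mbox{guess}}$ gives $P_{\mbox{guess}}(\mc{E}_1) = \max_{\{\Pi_x\}}\sum_x p_x \mbox{Tr}[\Pi_x(\Lambda \otimes \mbox{id})(\rho_{DN}^{x})]$ and $P_{\mbox{guess}}(\mc{E}_2) = \max_{\{\Pi_x\}}\sum_x p_x \mbox{Tr}[\Pi_x(\Lambda \otimes \mbox{id})(\mc{U}_t \otimes \mbox{id})(\rho_{DN}^{x})]$, the maxima being over POVMs on $\mc{H}_d \otimes \mc{H}_N$; this is exactly~\eqref{Eq.CoroIneqBuscemiOurCase}. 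Combining this with the equivalence of the previous step closes the argument.

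The step that needs the most care — and the main obstacle — is verifying that our construction genuinely meets the hypotheses of Theorem~\ref{Thm.BuscemiDatta}: that a two-element family of CPTP maps sharing the input space $\mc{D}(\mc{H}_D)$ is a valid instance of their ``dynamical map'', that their divisibility notion for such a family amounts precisely to ``$\mc{N}_2$ factors through $\mc{N}_1$ via a CPTP map on $\mc{D}(\mc{H}_d)$'', and that nothing extra is being smuggled in (in particular we never use invertibility of $\mc{U}_t$, only that it is a channel). Once this bookkeeping is in place the corollary follows at once; note, moreover, that the $\Gamma_t$ produced this way is a genuine CPTP map on all of $\mc{D}(\mc{H}_d)$, which is slightly stronger than what Theorems~\ref{Thm.PreserveEquivalenceClass} and~\ref{Thm.CommutativityUandVandM} deliver on their own.
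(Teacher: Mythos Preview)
Your proposal is correct and is exactly the argument the paper has in mind: the corollary is stated immediately after Theorem~\ref{Thm.BuscemiDatta} with no separate proof, because it is meant to be the $T=2$ specialisation $\mc{N}_1=\Lambda$, $\mc{N}_2=\Lambda\circ\mc{U}_t$, with divisibility of $\{\mc{N}_1,\mc{N}_2\}$ read as the existence of a CPTP $\Gamma_t$ with $\Lambda\circ\mc{U}_t=\Gamma_t\circ\Lambda$. Your write-up simply makes explicit the bookkeeping the paper leaves implicit.
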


Considering the action of the dual maps $\Lambda^{\ast}$ and $\mc{U}_{t}^{\ast}$, and also defining 
\be
\tilde{\Pi}_{x}:= (\Lambda^{\ast} \otimes \mbox{id})(\Pi_{x}), \forall \,\, x
\label{Eq.LambdaStarActingOnPOVMs}
\ee
we could rewrite the content of ineq.~\eqref{Eq.CoroIneqBuscemiOurCase} in a more meaningful way: 
\begin{align}
&\max_{\{\tilde{\Pi}_{x}\}}\sum_{x}p_x \mbox{Tr}[\tilde{\Pi}_{x}\rho_{DN}^{x}] \geq \nonumber \\
&\geq \max_{\{\tilde{\Pi}_{x}\}}\sum_{x}p_x \mbox{Tr}[(\mc{U}_{t}^{\ast} \otimes \mbox{id})(\tilde{\Pi}_{x})\rho_{DN}^{x}] \nonumber \\
&= \max_{\{\tilde{\Pi}_{x}\}}\sum_{x}p_x \mbox{Tr}[({U}_{t}^{\ast} \otimes \mbox{id})(\tilde{\Pi}_{x})({U}_{t} \otimes \mbox{id})\rho_{DN}^{x}] \nonumber \\
&= \max_{\{\tilde{\Pi}_{x}\}}\sum_{x}p_x \mbox{Tr}[(\tilde{\Pi}_{x}^{t})\rho_{DN}^{x}],
\label{Eq.CoroIneqBuscemiOurCase2ndVersion}
\end{align}
with $\tilde{\Pi}_{x}^{t}:=(U_{t}^{\ast} \otimes \ \mbox{id})\tilde{\Pi}_{x} (U_{t} \otimes \ \mbox{id})$ for all $x$. This second, more visual version of corollary~\ref{Cor.BuscemiAndDattaInOurCase} allows us to formulate the following result:
\begin{cor}
Let $\Lambda:\mc{H}_{D} \longrightarrow \mc{H}_{d}$ be a CPTP map, and let 
\begin{align}
\Pi&:=\left\lbrace \,\, \{(\Lambda^{\ast} \otimes \mbox{id})(\Pi_{x})\}_{x};  \,\, \{\Pi_{x}\} \,\, \mbox{is}\right. \nonumber \\
&\left. \mbox{a POVM acting on} \,\, \mc{H}_{D} \otimes \mc{H}_{d} \right\rbrace,
\label{Eq.DefiningPOVMCoro}
\end{align} 
so that whenever 
\be
(\mc{U}^{\ast}_{t}\otimes id)(\Pi) \subset \Pi
\label{Eq.CorPreservingPOVMs}
\ee
there is a well-defined CPTP map $\Gamma_{t}$ making the diagram commutative.
\label{Cor.PreservationOfPOVM}
\end{cor}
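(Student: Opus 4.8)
The plan is to obtain Corollary~\ref{Cor.PreservationOfPOVM} as an immediate consequence of Corollary~\ref{Cor.BuscemiAndDattaInOurCase} together with the dual reformulation already carried out in Eq.~\eqref{Eq.CoroIneqBuscemiOurCase2ndVersion}. By Corollary~\ref{Cor.BuscemiAndDattaInOurCase}, the emergent CPTP map $\Gamma_{t}$ exists precisely when inequality~\eqref{Eq.CoroIneqBuscemiOurCase} holds for every ancilla $\mc{H}_{N}$ and every ensemble $\{p_{x},\rho^{x}_{DN}\}_{x}$. Passing to the pulled-back effects $\tilde{\Pi}_{x}=(\Lambda^{\ast}\otimes\mathrm{id})(\Pi_{x})$ of Eq.~\eqref{Eq.LambdaStarActingOnPOVMs}, that inequality reads
\[
\max_{\{\tilde{\Pi}_x\}\in\Pi}\sum_x p_x\,\mathrm{Tr}\!\left[\tilde{\Pi}_x\,\rho^{x}_{DN}\right]\;\geq\;\max_{\{\tilde{\Pi}_x\}\in\Pi}\sum_x p_x\,\mathrm{Tr}\!\left[\tilde{\Pi}^{t}_x\,\rho^{x}_{DN}\right],
\]
with $\tilde{\Pi}^{t}_{x}=(U^{\ast}_{t}\otimes\mathrm{id})\,\tilde{\Pi}_{x}\,(U_{t}\otimes\mathrm{id})$ and $\Pi$ the set defined in Eq.~\eqref{Eq.DefiningPOVMCoro}. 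So the whole statement reduces to comparing one and the same linear functional, maximised over $\Pi$, with itself maximised over the image of $\Pi$ under the map $\phi:\{\tilde{\Pi}_x\}\mapsto\{\tilde{\Pi}^{t}_x\}$.

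First I would record that $\phi$ is just $(\mc{U}^{\ast}_{t}\otimes\mathrm{id})$ acting entrywise on a collection of effects: conjugating by the unitary $U_{t}\otimes\mathrm{id}$ preserves positivity and carries $\sum_x\tilde{\Pi}_x=\mathds{1}$ to $\sum_x\tilde{\Pi}^{t}_x=\mathds{1}$, so $\phi$ sends any POVM on $\mc{H}_{D}\otimes\mc{H}_{N}$ to a POVM on the same space. The genuine content of hypothesis~\eqref{Eq.CorPreservingPOVMs} is therefore that $\phi$ keeps the \emph{distinguished} subset of pulled-back POVMs invariant, $\phi(\Pi)\subseteq\Pi$. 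Granting this, the right-hand side above is the supremum of $\{\Theta_x\}\mapsto\sum_x p_x\,\mathrm{Tr}[\Theta_x\rho^{x}_{DN}]$ taken over $\phi(\Pi)$, hence over a subset of $\Pi$, and so it is no larger than the supremum of the same functional over all of $\Pi$, which is the left-hand side. Thus inequality~\eqref{Eq.CoroIneqBuscemiOurCase} holds for the given ensemble, and since the ensemble was arbitrary, Corollary~\ref{Cor.BuscemiAndDattaInOurCase} delivers the desired $\Gamma_{t}$.

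The delicate point --- and the one I expect to be the main obstacle --- is the quantification over the ancilla. Corollary~\ref{Cor.BuscemiAndDattaInOurCase} demands inequality~\eqref{Eq.CoroIneqBuscemiOurCase} for \emph{all} finite-dimensional $\mc{H}_{N}$, whereas $\Pi$ in Eq.~\eqref{Eq.DefiningPOVMCoro} is written with the ancilla fixed to $\mc{H}_{d}$. To close the argument cleanly one should either (i) understand $\Pi$ and the invariance condition~\eqref{Eq.CorPreservingPOVMs} as being imposed separately for every ancilla $\mc{H}_{N}$, in which case the monotonicity argument above runs verbatim for each $N$; or (ii) invoke the standard reduction (via the Choi isomorphism / purification of the ensembles) showing that it suffices to test the inequality on one fixed ancilla dimension, so that controlling the $\mc{H}_{D}\otimes\mc{H}_{d}$ case already implies the general one. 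Beyond this bookkeeping about ancillas, the proof is essentially the one-line observation that a supremum over a subset cannot exceed the supremum over the whole set, layered on top of the reformulation~\eqref{Eq.CoroIneqBuscemiOurCase2ndVersion}; no new estimates are needed.
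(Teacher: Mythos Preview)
Your proposal is correct and follows essentially the same route as the paper: the corollary is stated immediately after the dual reformulation Eq.~\eqref{Eq.CoroIneqBuscemiOurCase2ndVersion} precisely because, once the guessing probabilities are rewritten in terms of the pulled-back POVMs $\tilde{\Pi}_x$ and their unitary conjugates $\tilde{\Pi}^{t}_x$, the invariance hypothesis $(\mc{U}^{\ast}_{t}\otimes\mathrm{id})(\Pi)\subset\Pi$ makes the right-hand maximum a supremum over a subset of $\Pi$, hence dominated by the left-hand side, and Corollary~\ref{Cor.BuscemiAndDattaInOurCase} finishes the job. The paper leaves this implicit rather than spelling it out, and your observation about the ancilla dimension is apt: the statement writes $\mc{H}_{D}\otimes\mc{H}_{d}$ while the subsequent remark and discussion refer to $\mc{H}_{D}\otimes\mc{H}_{N}$, so the intended reading is your option~(i), with the invariance imposed for each ancilla.
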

\textit{Remark:} Note that the set $\Pi$ appearing in eq.~\eqref{Eq.DefiningPOVMCoro} is a subset of the set of all POVMs acting on $\mc{D}(\mc{H}_{D}) \otimes\mc{D}(\mc{H}_{N})$ generated by the application of $\Lambda^{\ast} \otimes \mbox{id}$. As a matter of fact, 
as $\Lambda$ is a CPTP map, its dual $\Lambda^{\ast}$ is unital, and as such it maps $\mathbb{1}_{D}$ the identity in $\mc{L}(\mc{H}_{d})$ onto $\mathbb{1}_{D}$ the identity in $\mathcal{L}(\mathcal{H}_{D})$. This guarantees that whenever $\{\Pi_{x}\}_{x}$ is a POVM, $\{(\Lambda^{\ast} \otimes \mbox{id})(\Pi_{x})\}_{x}$ also is a valid POVM, acting on a different space though.

In a nutshell, we can interpret corollary~\ref{Cor.PreservationOfPOVM} above as saying that whenever the underlying unitary $\mc{U}_{t}$ preserves the structure of the measurements arising from the application of the coarse-graining on the set of all POVMs acting on the product $\mc{H}_{D} \otimes \mc{H}_{N}$, it is possible to define an emergent effective dynamics consistent with the diagram. As expected, this result shows that to obtain $\Gamma_{t}$ as a proper quantum channel, we should enforce some good matching between $\mc{U}_{t}$ and $\Lambda$.


\section{Unitary Equivalence}\label{Sec.UnitEquivApp}

We conclude our list of different approaches to the coarse-graining scenario with what we call the \emph{unitary equivalence} approach. Despite being rather abstract, and difficult to be checked in practice even in the simpler cases~\cite{DYW19}, the necessary and sufficient condition we come up with and discuss here shall be seen as a way to study properties of the possible emergent macroscopic dynamics as function of the underlying microscopic evolution.

To a certain extent, the method we work here here is, to a certain extent,  a consequence of the well-known equivalence of ensembles~\cite{Watrous18}:
\begin{thm}
Let $\{\psi_{i}\}_{i=1}^{N}$ and $\{\phi_{i}\}_{i=1}^{N}$ be two set of non-necessarily normalized vectors in $\mathcal{H}$. The equality
\begin{equation}
    \sum_{i \in [N]}\ketbra{\psi_i}{\psi_i} = \sum_{i \in [N]}\ketbra{\phi_i}{\phi_i}
    \label{Eq:ThmEqMixture}
\end{equation}
holds true if, and and only if, there is an $N \times N$ unitary matrix $U$ satisfying
\begin{equation}
    \ket{\psi_i}=\sum_{j \in [N]}U_{ij}\ket{\phi_{j}}, \,\, \forall i \in [N].
\end{equation}
\label{Thm:EnsambleEquivalence}
\end{thm}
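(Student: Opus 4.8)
The plan is to prove the two implications separately: the ``if'' direction is a one-line computation, while the ``only if'' direction carries the content, and I would route it through a canonical decomposition of the common operator.

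For $(\Leftarrow)$: assuming $\ket{\psi_i}=\sum_j U_{ij}\ket{\phi_j}$ with $U$ an $N\times N$ unitary, expand $\sum_i\ketbra{\psi_i}{\psi_i}$ and collect the coefficient of $\ketbra{\phi_j}{\phi_k}$, which is $\sum_i U_{ij}\overline{U_{ik}}=(U^\ast U)_{kj}=\delta_{kj}$, leaving $\sum_j\ketbra{\phi_j}{\phi_j}$. That settles this direction.

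For $(\Rightarrow)$: let $P$ denote the common positive semidefinite operator and fix a spectral decomposition $P=\sum_{l=1}^r\lambda_l\ketbra{e_l}{e_l}$ with $\lambda_l>0$, $\{\ket{e_l}\}$ orthonormal and $r=\mathrm{rank}\,P\le N$ (it is $\le N$ since $P$ is a sum of $N$ operators of rank at most one). Introduce reference vectors $\ket{g_l}=\sqrt{\lambda_l}\,\ket{e_l}$ for $l\le r$ and $\ket{g_l}=0$ for $r<l\le N$, so that $P=\sum_{l=1}^N\ketbra{g_l}{g_l}$. The heart of the argument is a sub-claim: any length-$N$ decomposition $P=\sum_{i=1}^N\ketbra{\chi_i}{\chi_i}$ is related to the reference one by an $N\times N$ unitary $W$, i.e. $\ket{\chi_i}=\sum_l W_{il}\ket{g_l}$. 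To prove the sub-claim, first note every $\ket{\chi_i}$ lies in $\mathrm{supp}\,P$, because if $\ket{\chi}$ is orthogonal to that support then $0=\langle\chi|P|\chi\rangle=\sum_i|\langle\chi|\chi_i\rangle|^2$; hence one may write $\ket{\chi_i}=\sum_{l=1}^r W_{il}\ket{g_l}$ with an $N\times r$ matrix $W$. Substituting into $P=\sum_i\ketbra{\chi_i}{\chi_i}$ and matching against $\sum_l\lambda_l\ketbra{e_l}{e_l}$ in the linearly independent operator family $\{\ketbra{e_l}{e_m}\}_{l,m\le r}$ forces $W^\ast W=\mathds{1}_r$; one then completes $W$ to an $N\times N$ unitary by adjoining $N-r$ orthonormal columns, and since those extra columns multiply the zero vectors $\ket{g_l}$ ($l>r$) the relation $\ket{\chi_i}=\sum_{l=1}^N W_{il}\ket{g_l}$ persists. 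Applying the sub-claim to $\{\psi_i\}$ and to $\{\phi_i\}$ yields unitaries $W^{(\psi)}$ and $W^{(\phi)}$; inverting the second relation gives $\ket{g_l}=\sum_j\overline{W^{(\phi)}_{jl}}\,\ket{\phi_j}$, whence $\ket{\psi_i}=\sum_j\left(W^{(\psi)}(W^{(\phi)})^\ast\right)_{ij}\ket{\phi_j}$, so $U:=W^{(\psi)}(W^{(\phi)})^\ast$ is the required unitary.

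The step I expect to be the main obstacle is the padding-and-completion maneuver. The naive guess $U=W^{(\psi)}(W^{(\phi)})^\ast$ built directly from the bare $N\times r$ isometries is only a partial isometry when $r<N$ (its range is the $r$-dimensional support of $P$, not all of $\mathcal{H}$), so one genuinely must inflate both decompositions to length $N$ with zero vectors and complete the isometries to honest unitaries before composing. Once that bookkeeping is in place, the remaining ingredients — the support argument and the coefficient comparison yielding $W^\ast W=\mathds{1}_r$ — are routine linear algebra.
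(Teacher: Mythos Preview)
The paper does not actually prove this theorem; it quotes it as the well-known ``equivalence of ensembles'' and refers the reader to Watrous' textbook, so there is no in-paper argument to compare against. Your proof is correct and is essentially the standard textbook argument (as in Watrous or Nielsen--Chuang): pass through a canonical spectral decomposition of the common positive operator, show each length-$N$ ensemble is related to it by an $N\times r$ isometry, pad with zero vectors and complete the isometries to $N\times N$ unitaries, then compose. Your flagging of the padding/completion step as the one place requiring care is accurate, and you handle it properly; the rest is indeed routine linear algebra.
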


Remarkably, thm.~\ref{Thm:EnsambleEquivalence} above says that there is no way out, whenever two ensembles are the same there must be a tight connection linking them together in a manner that it is possible to write down the constituents of one in terms of the other. Being possible, therefore, to analyse properties of one looking at known properties of the other. It is this very same idea we will transport to our problem, and explore from now on in this section. 

We start with a rather important corollary of thm.~\ref{Thm:EnsambleEquivalence}:
\begin{cor}
Two set of Kraus operators $\{K_i\}_{i=1}^{N}$ and $\{\tilde{K}_{j}\}_{j=1}^{N}$ represent the same quantum channel if, and only if, there is an $N \times N$ unitary $U$ satisfying 
\begin{equation}
K_i=\sum_{j \in [N]}U_{ij}\tilde{K}_{j}
\label{Eq:EqCoroEquivalentChannel}
\end{equation}
\label{Cor:EquivalenceOfQuantumChannels}
\end{cor}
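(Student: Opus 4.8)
The plan is to derive Corollary~\ref{Cor:EquivalenceOfQuantumChannels} as a direct consequence of Theorem~\ref{Thm:EnsambleEquivalence} by moving from the level of quantum channels to the level of vectors on a doubled Hilbert space, via the Choi--Jamio\l kowski isomorphism. The key observation is that two sets of Kraus operators $\{K_i\}$ and $\{\tilde K_j\}$ represent the same channel $\Phi$ if and only if they produce the same Choi matrix, and the Choi matrix of a channel with Kraus operators $\{K_i\}$ is precisely $J(\Phi) = \sum_i \ketbra{K_i}{K_i}$, where $\ket{K_i} := (K_i \otimes \mathds{1})\ket{\Omega}$ is the vectorization of $K_i$ (with $\ket{\Omega}$ the unnormalized maximally entangled vector). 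Thus the channel-equality statement is translated verbatim into the hypothesis of Theorem~\ref{Thm:EnsambleEquivalence} applied to the two families of (generally unnormalized) vectors $\{\ket{K_i}\}_{i=1}^N$ and $\{\ket{\tilde K_j}\}_{j=1}^N$ in the space $\mc{H} \otimes \mc{H}$.

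First I would recall the vectorization map $K \mapsto \ket{K}$ and note its two properties we need: it is linear and injective (so $K_i = \sum_j U_{ij}\tilde K_j$ if and only if $\ket{K_i} = \sum_j U_{ij}\ket{\tilde K_j}$), and $\sum_i \ketbra{K_i}{K_i}$ equals the Choi matrix of the channel $\rho \mapsto \sum_i K_i \rho K_i^{\ast}$. Second, I would invoke the standard fact that the Choi matrix determines the channel uniquely, so that $\{K_i\}$ and $\{\tilde K_j\}$ represent the same channel exactly when $\sum_i \ketbra{K_i}{K_i} = \sum_j \ketbra{\tilde K_j}{\tilde K_j}$. Third, I would apply Theorem~\ref{Thm:EnsambleEquivalence} directly to these vectors: this gives the existence of the $N \times N$ unitary $U$ with $\ket{K_i} = \sum_j U_{ij}\ket{\tilde K_j}$, and pulling back through injectivity of vectorization yields Eq.~\eqref{Eq:EqCoroEquivalentChannel}. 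The converse direction is immediate: if Eq.~\eqref{Eq:EqCoroEquivalentChannel} holds, a short computation using unitarity of $U$ (i.e. $\sum_i U_{ij}^{\ast} U_{ik} = \delta_{jk}$) shows $\sum_i K_i \rho K_i^{\ast} = \sum_j \tilde K_j \rho \tilde K_j^{\ast}$ for all $\rho$, so the channels coincide.

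One technical point worth handling carefully is the bookkeeping when the two Kraus sets have different cardinalities; the statement fixes both at $N$, so I would simply remark that one always pads the shorter list with zero operators so that both families have the same length $N$ (zero operators vectorize to the zero vector and do not affect either the channel or the rank-one sum), which is exactly the convention under which Theorem~\ref{Thm:EnsambleEquivalence} is stated. The main obstacle — really the only nontrivial ingredient — is establishing the dictionary between ``same channel'' and ``same operator-sum-of-rank-one's on the doubled space,'' i.e. the injectivity and the Choi-matrix identity for vectorization; once that bridge is in place, the corollary is a one-line application of the preceding theorem. Everything else is routine linear algebra, so I would keep the exposition of those parts brief and cite the standard reference~\cite{Watrous18} for the Choi--Jamio\l kowski correspondence.
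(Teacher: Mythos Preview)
Your proposal is correct and follows essentially the same route the paper itself sketches: translate the equality of channels into equality of Choi matrices via the Choi--Jamio\l kowski isomorphism, recognise the Choi matrix as the rank-one sum $\sum_i \ketbra{K_i}{K_i}$, and then invoke Theorem~\ref{Thm:EnsambleEquivalence}. The paper does not spell out the details (it defers to~\cite{Watrous18}), so your added remarks on injectivity of vectorization and on padding with zero operators are welcome elaborations rather than deviations.
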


We refer to ref.~\cite{Watrous18} for a proof of this result. In a nutshell, though, the main aspects of the proof goes as follows: first, a translation of the problem from equivalent quantum channels to equivalent states, via Choi-Jamiolkoski isomorphism; and second, the use of thm.~\ref{Thm:EnsambleEquivalence} for necessary and sufficient conditions for two ensembles. Much in the spirit of thm.~\ref{Thm:EnsambleEquivalence}, the corollary~\ref{Cor:EquivalenceOfQuantumChannels} says that whenever two quantum channels are equal, there must be a tight connection -- represented by the unitary $U$ in Eq.~\eqref{Eq:EqCoroEquivalentChannel} -- between any ``different" Kraus representation of such a channel.  

Now we have got in hands all the necessary ingredients to establish our desired connection between the microscopic and the macroscopic level of our coarse-graining diagram (fig.~\ref{Fig_Diagram_Quantum}). To begin with, given $\Lambda$ a coarse-graining map with set Kraus operators being $\{M_k\}_{k=1}^{N}$, suppose that there exists an emergent $CPTP$ dynamics 
\begin{align}
    \Gamma_{t}(\sigma_0)=\sum_{j \in [M]}K_j \sigma_0 K_{j}^{\ast}
    \label{Eq.KrausMapsForGamma}
\end{align}
making the diagram commutes. In this situation, for all $\rho_0 \in \mc{D}(\mc{H}_D)$ it must be true that:
\begin{align}
    (\Gamma_{t} \circ \Lambda)(\rho_0) & = (\Lambda \circ \mc{U}_{t})(\rho_0) \nonumber \\
    \sum_{i,j} K_i M_{j}\rho_0 M_{j}^{\ast}K_{i}^{\ast} & = \sum_{i}M_{i}U \rho_{0} U^{\ast}  M_{i}^{\ast},
    \label{Eq:ChainEqEquivalent}
\end{align}
and therefore the two set of Kraus maps for the composition $\{K_iM_j\}_{(i,j) \in [M]\times[N]}$ and $\{M_i U\}_{i \in [N]}$ must be unitary equivalent. Which means that there must exist a unitary matrix $V$ satisfying:
\begin{equation}
    K_i M_j = \sum_{k \in [N]}V_{(i,j),k} M_{k}U
\end{equation}
for all $i,j$ in $[N] \times [M]$. 

Moving on, on the other hand now, suppose we are given not only a coarse-graining map $\Lambda$ with Kraus operators $\{M_k\}_{k=1}^{N}$ and the underlying dynamics $\mc{U}$, but also a fixed unitary~\footnote{As we can add as many zeroes we want, we are deliberately omitting the dimension of the unitary matrices in our argument.} matrix $W$. In these circumstances it is possible to define a set of Kraus operators $\{K_{i,j}\}_{i,j}$ given by
\begin{equation}
    K_{i,j}:=\sum_{k \in [N]}W_{(i,j),k}M_kU, \,\, \mbox{for all} \,\, i,j,
    \label{Eq:DefiningGammaViaEquivalenceKraus}
\end{equation}
as well as a map from $\Lambda(\mc{D}(\mc{H}_D))$ onto itself given by
\begin{equation}
    \Gamma(\Lambda(\rho_0)):=\sum_{i,j,k}K_{(i,j),k}\rho_0K_{(i,j),k}^{\ast}.
    \label{Eq:DefGammaViaEquivalenceMap}
\end{equation}
Because eq.~\eqref{Eq:DefiningGammaViaEquivalenceKraus} unitarily connects $K_{i,j}$'s with $M_k$'s, corollary~\ref{Cor:EquivalenceOfQuantumChannels} says that the uppermost and the lowermost paths in the coarse-graining diagram must coincide. In other words, the emergent map expressed by the composition in Eq.~\eqref{Eq:DefGammaViaEquivalenceMap} makes the diagram commutes. In other words, we have just proven the following result:
\begin{prop}
There exists $\Gamma_{t}$, with Kraus operators $\{K_i\}_{i=1}^{M}$ making the diagram commutes if, and only if, there is a unitary matrix $V$ satisfying $K_iM_j=\sum_{k \in [M]}V_{(i,j),k}M_kU$, where $U$ represents the underlying unitary dynamics and $\{M_j\}_{j=1}^{N}$ is the set of Kraus operators for the coarse-graining map $\Lambda$
\end{prop}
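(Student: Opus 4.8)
The plan is to prove both directions by leveraging Corollary~\ref{Cor:EquivalenceOfQuantumChannels} on the two composite channels appearing along the two paths of the diagram in Fig.~\ref{Fig_Diagram_Quantum}. The diagram commutes precisely when $\Gamma_t \circ \Lambda = \Lambda \circ \mc{U}_t$ as CPTP maps on $\mc{D}(\mc{H}_D)$. First I would fix a Kraus decomposition $\{K_i\}_{i=1}^M$ for $\Gamma_t$ and the given decomposition $\{M_j\}_{j=1}^N$ for $\Lambda$, and observe that $\{K_i M_j\}_{(i,j)\in[M]\times[N]}$ is a Kraus family for $\Gamma_t\circ\Lambda$, while $\{M_j U\}_{j\in[N]}$ is a Kraus family for $\Lambda\circ\mc{U}_t$ (here $U$ implements $\mc{U}_t$).

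For the forward direction, assume $\Gamma_t$ with Kraus operators $\{K_i\}$ makes the diagram commute. Then the two Kraus families above represent the same channel. To apply Corollary~\ref{Cor:EquivalenceOfQuantumChannels} I need both families indexed by the same set; since one has $MN$ elements and the other $N$, I would pad the shorter family $\{M_j U\}_j$ with $MN-N$ zero operators. The corollary then yields a unitary $V$ of size $MN\times MN$ with $K_i M_j = \sum_{k} V_{(i,j),k} M_k U$ for all $(i,j)$, which (after relabelling the sum index as in the statement) is exactly the claimed relation.

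For the converse, suppose such a unitary $V$ exists. Then define $K_{(i,j)} := \sum_k V_{(i,j),k} M_k U$ as in Eq.~\eqref{Eq:DefiningGammaViaEquivalenceKraus}, and let $\Gamma_t$ be the map with these Kraus operators, restricted to $\Lambda(\mc{D}(\mc{H}_D))$ as in Eq.~\eqref{Eq:DefGammaViaEquivalenceMap}. By construction $\{K_{(i,j)}\}$ is unitarily related to (a padding of) $\{M_k U\}_k$, so Corollary~\ref{Cor:EquivalenceOfQuantumChannels} gives that the channel with Kraus operators $\{K_{(i,j)}\}$ equals the channel with Kraus operators $\{M_k U\}$, i.e. $\Lambda\circ\mc{U}_t$; but the former, precomposed with $\Lambda$, is $\Gamma_t\circ\Lambda$ since the $M_j$ already appear inside $K_{(i,j)}$ — more carefully, one checks $\sum_{i,j} K_{(i,j)} \rho_0 K_{(i,j)}^\ast = \Lambda(\mc{U}_t(\rho_0))$ directly from the unitarity of $V$ and $\sum_k M_k^\ast M_k = \mathds{1}$, which exhibits commutativity. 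One should also verify $\Gamma_t$ is trace-preserving on its domain, which follows because it agrees with $\Lambda\circ\mc{U}_t$ there.

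The main obstacle I anticipate is the bookkeeping around well-definedness of $\Gamma_t$: Eq.~\eqref{Eq:DefGammaViaEquivalenceMap} defines $\Gamma_t$ only on the image $\Lambda(\mc{D}(\mc{H}_D))$ via a choice of preimage $\rho_0$, so one must argue this does not depend on the choice — but this is guaranteed precisely by the commutativity relation one is proving (equivalently by Theorem~\ref{Thm.PreserveEquivalenceClass}), so the argument must be sequenced so that the channel-equality from Corollary~\ref{Cor:EquivalenceOfQuantumChannels} is established first and well-definedness read off from it, rather than assumed. The index-set padding is routine but needs to be stated cleanly so the dimensions of $V$ match on both sides.
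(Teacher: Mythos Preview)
Your proposal is correct and follows essentially the same route as the paper: both directions are obtained by applying Corollary~\ref{Cor:EquivalenceOfQuantumChannels} to the two Kraus families $\{K_iM_j\}_{(i,j)}$ and $\{M_kU\}_k$ for the composite channels $\Gamma_t\circ\Lambda$ and $\Lambda\circ\mc{U}_t$. Your explicit treatment of the zero-padding and of the well-definedness of $\Gamma_t$ on $\Lambda(\mc{D}(\mc{H}_D))$ are points the paper handles only informally (the padding is relegated to a footnote, and well-definedness is not discussed in this section), so your version is in fact slightly more careful; the momentary slip about ``precomposed with $\Lambda$'' is harmless since you immediately give the correct direct computation.
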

We summarise the last paragraphs above saying that to have a well-defined emergent macroscopic dynamics, the Kraus decomposition of it must be unitarily connected with both the underlying unitary closed dynamics and the coarse-graining map as expressed in eq.~\eqref{Eq:EqCoroEquivalentChannel}, regardless of the initial states in $\mc{D}(\mc{H}_D)$ or subsets of $\Lambda(\mc{D}(\mc{H}_D))$ -- as noticed by the authors in ref.~\cite{DCBM17}. That is to say that, although very abstract, our result shows that it is possible to study properties of the emergent map examining, instead, the coarse-graining and the unitary underlying dynamics. On the other way round, let us mention that in the situation in which the emergent $\Gamma_t$ and the coarse-graining maps are known, it is also possible to infer features of the underlying dynamics the microscopic system is going through.


\section{Discussion}\label{Sec.Disc}

This contribution compiles four different approaches to the study of coarse-grained quantum dynamics emerging from loss or lack of information. In our coarse-graining scenario, the inherent imperfections manifested in the macroscopic level are modelled by the action of a CPTP map. Microscopically, where all the information is suppose to be encoded, the system we are interested in evolves unitarily. The existence of an emergent dynamics equals the commutativity of the diagram in fig.~\ref{Fig_Diagram_Quantum}. To tackle this problem, we have shown how it is possible to use geometrical, set theoretical, optimisation and linear-algebraic arguments. Each of which having their benefits, practical limitations, particularities and similarities. All of them showing that the commutativity asked for the well-definition of the emergent dynamics can be expressed in terms of a good matching between the coarse-graining map and the unitary evolution. In particular, our geometrical construction exemplifies this characteristic quite clearly. The two examples we construct there aim to show how much the coarse-graining maps constrict the possible family of unitary dynamics that would give rise to a well-defined emergent map.

We have also made a classical digression, comparing the usual quantum scenario with two possible classical ones. Differently from the quantum case, both classical approaches say that it is always true that, despite emergent, we can infer some sort of influence between the observed coarse-grained variables. Additionally, we also find this case useful due to its proximity with the conditional quantum state approach~\cite{LS13,LS14}. In the latter the authors show that there is always a translation, at least on the structural level, from equations involving classical probability to equations involving quantum conditional states. This opens up another venue of investigation that, we hope, will be further explored in more details in subsequent works. 

Still focusing on the classical case, we must mention that even though our claim is correct, there is something not completely satisfactory about it. In the definition of $\tilde{\mathds{P}}(Y | X )$ we made use of $\mathds{P}(A | X)$. In a certain sense that probability goes towards the wrong direction, as the direction of the arrow connecting $A$ and $X$ starts at $A$ and ends at $X$. Whereas it does not represent any obstacle from the mathematical point of view, as it always possible to do so, from a physical perspective it blocks off better interpretations of our findings, as it is rather difficult to get a hold of it if we only can access $X$ from $A$ and not the other way round. We hope the analysis through quantum conditional states, as we mentioned, together with a novel notion of arrow reversion can shed new light on this problem.

We would like to conclude this paper discussing prospective and future works. We have already pointed out to the possible fruitful connection between conditional quantum states and the coarse-graining scenario: be it to strengthen the parallel between the classical and quantum cases or be it to investigate different and foundational definitions of good-matching for the diagram of fig.~\ref{Fig_Diagram_Quantum}, as it has been done in ref.~\cite{Duarte20}. Another dimension that can also be explored ties back to the SDP characterisation of the problem, as we think it is also possible to write down the emergent map (or the Choi-Jamio\l kowski image of it) whenever such map happens to exist. Writing down the emergent map is important not only within the scope of our problem but also for studying memory effects in open quantum systems. Finally, we would like to mention that it might also be possible that our diagrams express matching relations much in the same spirit as the compatibility relations present in refs.~\cite{HM17, CHT19, Mori20, HMZ16}. That would bring in not only another different -- and well-developed -- mathematical toolbox but also a completely different physical perspective to both problems. We hope to bridge this gap in the near future.


\begin{acknowledgments}
CD wishes to thank F. de Melo for introducing him to the subject. The authors thank  J. Piennar and B. Rizzuti for 
valuable discussions.\\

BA acknowledges financial support from the Brazilian ministries and agencies MEC, 
MCTIC, and FAPEMIG. MTC and BA acknowledges financial support from CNPq. CD was supported by a fellowship from the Grand Challenges Initiative at Chapman University.\\

This paper is part of the Brazilian National Institute on Science and Technology on Quantum Information - INCT-IQ This project/research was supported by grant number FQXi-RFP-IPW-1905 from the Foundational Questions Institute and Fetzer Franklin Fund, a donor advised fund of Silicon Valley Community Foundation.
\end{acknowledgments}

\bibliography{biblio3}
\end{document}